\newcommand{\one}{\mathbbm{1}}
\newcommand{\E}{\operatorname{\mathsf{E}}}
\newcommand{\Q}{\operatorname{\mathsf{P}}}
\newcommand{\R}{\mathbb{R}}
\newcommand{\Aa}{\mathcal{A}}
\newtheorem{propo}{Proposition}
\theoremstyle{definition}
\newtheorem{defin}[propo]{Definition}
\newtheorem{remark}[propo]{Remark}
\definecolor{darkteal}{rgb}{0,0.35,0.35}
\title{Assessing the conditional calibration of interval forecasts using decompositions of the interval score}
\date{}
\author{Sam Allen\thanks{Institute of Statistics, Karlsruhe Institute of Technology, Karlsruhe, Germany. \url{sam.allen@kit.edu}} \and Julia Burnello\thanks{Independent researcher, \url{julia.burnello@gmail.com}} \and Johanna Ziegel\thanks{Seminar for Statistics, ETH Zurich, Zurich, Switzerland. \url{ziegel@stat.math.ethz.ch}}}
\begin{document}

\maketitle

\begin{abstract}
    Forecasts for uncertain future events should be probabilistic. Probabilistic forecasts are commonly issued as prediction intervals, which provide a measure of uncertainty in the unknown outcome whilst being easier to understand and communicate than full predictive distributions. The calibration of a $(1-\alpha)$-level prediction interval can be assessed by checking whether the probability that the outcome falls within the interval is equal to $1-\alpha$. However, such coverage checks are typically unconditional and therefore relatively weak. Although this is well known, there is a lack of methods to assess the conditional calibration of interval forecasts. In this work, we demonstrate how this can be achieved via decompositions of the well-known interval (or Winkler) score. We study notions of calibration for interval forecasts and then introduce a decomposition of the interval score based on isotonic distributional regression. This decomposition exhibits many desirable properties, both in theory and in practice, which allows users to accurately assess the conditional calibration of interval forecasts. This is illustrated on simulated data and in three applications to benchmark regression datasets.
\end{abstract}

\section{Introduction}\label{sec:intro}

Reliable forecasts are essential for effective decision making in the face of uncertainty. Forecasts are commonly issued as prediction intervals, which specify a range within which the outcome is expected to lie with a certain probability. Prediction intervals account for the uncertainty in the unknown outcome, in contrast to traditional point forecasts, and additionally serve as a convenient means to communicate the output from full predictive distributions to forecast users. They are therefore widely reported in a variety of application domains: national weather centres commonly issue interval forecasts for variables such as temperature and precipitation; epidemiological forecasts often take the form of a collection of prediction intervals at several probability levels, such as those available in the \cite{Covid19Hub}; while prediction intervals have also been requested in popular forecasting competitions \citep{Makridakis2020,Makridakis2022}. In the following, the terms \emph{interval forecast} and \emph{prediction interval} are used interchangeably.

Interval forecasts have gained considerable attention recently due to the development of conformal prediction algorithms that can generate prediction intervals that are calibrated (or valid) by construction \citep[Section 2]{VovkEtAl2022}. Generally, a $(1-\alpha)$-level prediction interval, for $\alpha \in (0, 1)$, is said to be calibrated if it contains the outcome with probability $1-\alpha$. This can be assessed in practice by checking that the empirical coverage of the prediction intervals is close to $1-\alpha$. It has repeatedly been acknowledged that such unconditional checks for calibration are insufficient, since biases can cancel out in such a way that the forecasts have the correct marginal coverage despite being clearly miscalibrated in particular circumstances \citep{Christoffersen1998}. For example, many conformal predictive algorithms achieve unconditional calibration despite issuing prediction intervals of a fixed length, which do not adapt to covariates or the complexity of the forecasting task \citep[see e.g. the discussion in][]{RomanoEtAl2019}.

It is therefore desirable to additionally assess the conditional calibration of interval forecasts. In practice, conditional calibration is typically evaluated by grouping together forecasts according to some additional information, and then assessing calibration separately for the forecasts in each of these groups. For example, interval forecasts could be assessed separately in different periods of time, or regions of space, or conditionally on aspects of the interval itself; \cite{FeldmanEtAl2021} and \cite{SebastianEtAl2024}, for example, propose conditioning on the length of the prediction interval.  

The conditional calibration of point forecasts and predictive distributions can be assessed using decompositions of scoring functions \citep{Gneiting2011} and scoring rules \citep{GneitingRaftery2007} respectively. While scoring functions and scoring rules summarise forecast accuracy in a single number, thus allowing competing forecasters to be ranked and compared, they can be decomposed into terms that measure different aspects of forecast performance, including forecast miscalibration. \cite{Sanders1963} and \cite{Murphy1973} first proposed decompositions of the squared error, which \cite{DeGrootFeinberg1983} and \cite{Brocker2009} generalised to any scoring rule. \cite{BentzienFriederichs2014} applied the decomposition to the quantile score (or pinball loss), and \cite{GneitingResin2023} introduced a general framework for decomposing scoring functions for specific functionals of a predictive distribution. A detailed history of score decompositions is provided by \cite{Mitchell2019}.

Several recent studies have proposed estimating score decomposition terms using isotonic regression: \cite{DimitriadisEtAl2021} originally proposed this when evaluating probability forecasts for binary outcomes; \cite{GneitingResin2023} extend this further to forecasts for particular functionals, which is applied to quantile forecasts by \cite{GneitingEtAl2023}; while \cite{ArnoldEtAl2023} employ isotonic distributional regression \citep[IDR;][]{HenziEtAl2021} to estimate the decomposition terms for the continuous ranked probability score \citep[CRPS;][]{MathesonWinkler1976} when evaluating the conditional calibration of full predictive distributions.

However, despite the importance of prediction intervals in many application domains, no score decompositions have been proposed to assess the conditional calibration of interval forecasts. In this work, we list several notions of interval forecast calibration, and then demonstrate how a strong notion of interval forecast calibration can be assessed via decompositions of the interval score (also referred to as the Winkler score), which is the canonical scoring function with which to assess interval forecasts \citep{Winkler1972,GneitingRaftery2007}. Following the work described above, by estimating the decomposition terms using isotonic distributional regression, we obtain a convenient framework with which to assess a strong conditional notion of the calibration of prediction intervals. We show that the resulting decomposition terms provide information that is not available from simply comparing the average length of unconditionally calibrated prediction intervals, which is commonly performed in practice \citep{ShaferVovk2008,FontanaEtAl2023}.

Notions of coverage and calibration of interval forecasts are introduced and compared in the following section, before Section \ref{sec:decomps} demonstrates how these can be assessed using decompositions of the interval score. These decompositions are applied to simulated data in Section \ref{sec:sims}, and baseline regression datasets in Section \ref{sec:apps}. All proofs are deferred to Appendix \ref{app:proofs}, while \verb!R! code to reproduce the results herein is available at \url{https://github.com/sallen12/CondIntCal_rep}.

\section{Calibration of interval forecasts}\label{sec:cali}

Suppose that we wish to forecast a real-valued outcome variable $Y \in \R$, and our forecast takes the form of a central $1 - \alpha$ prediction interval for some arbitrary fixed $\alpha \in (0, 1)$. We treat the interval forecast here as a random interval, with lower and upper bounds denoted by $L \in \R$ and $U \in \R$, respectively, with $L < U$ almost surely. In practice, forecast evaluation is performed over a set of many forecasts and observations, which correspond to realisations of $[L, U]$ and $Y$. We denote by $\Q$ the joint distribution of $L, U$ and $Y$, and, where relevant, assume that equalities involving these random variables hold almost surely.

The prediction interval $[L, U]$ may arise from the quantiles of a (random) predictive distribution $F$. We denote by $Q_{\alpha}(F)$ the set of $\alpha$-quantiles of $F$, and by $q_{\alpha}(F)$ an arbitrary element of this set. The set of quantiles $Q_{\alpha}(F)$ is either a singleton, making $q_{\alpha}(F)$ the unique $\alpha$-quantile of $F$, or a closed interval between the lower and upper $\alpha$-quantiles of $F$. We simplify the notation to $Q_{\alpha}(Y)$ and $q_{\alpha}(Y)$ to denote the $\alpha$-quantiles of the marginal distribution of $Y$, and $Q_{\alpha}(Y \mid X)$ and $q_{\alpha}(Y \mid X)$ for the $\alpha$-quantiles of the conditional distribution of $Y$ given a random variable $X$. 

The calibration of an interval forecast is typically assessed by checking whether the outcome falls within the interval with the desired coverage level,
\begin{equation}\label{eq:cal_unc}
	\Q(Y \in [L, U]) = 1 - \alpha.
\end{equation}
However, \eqref{eq:cal_unc} ignores that $[L, U]$ is a \emph{central} prediction interval, in that it neglects the probabilities with which the outcome falls above and below the interval; we additionally want that $\Q(Y < L) = \Q(Y > U) = \alpha/2$. We generalise this slightly by defining an interval forecast as (unconditionally) calibrated if it satisfies the following requirement.

\begin{defin}
    An interval forecast $[L, U]$ is \emph{unconditionally calibrated} if 
    \begin{equation}\label{eq:cal_unc2}
        \Q(Y < L) \le \frac{\alpha}{2} \le \Q(Y \le L), \quad \text{and} \quad \Q(Y > U) \le \frac{\alpha}{2} \le \Q(Y \ge U).
    \end{equation}
\end{defin}

If a forecast is unconditionally calibrated according to \eqref{eq:cal_unc2}, then 
\begin{equation}\label{eq:cal_unc1b}
    \Q(Y \in (L,U)) \le 1 - \alpha \le \Q(Y \in [L,U]),
\end{equation}
which simplifies to \eqref{eq:cal_unc} when $\Q(Y \in \{L, U\}) = 0$. This generalised notion of unconditional calibration accounts for discontinuities in the joint distribution of $(L,U,Y)$. In the literature, accommodating discontinuities in the joint distribution is often achieved by weakening \eqref{eq:cal_unc} to $\Q(Y \in [L, U]) \ge 1 - \alpha$, allowing conservative prediction intervals to be calibrated. The additional lower bound in \eqref{eq:cal_unc1b} strengthens the requirement for calibration by limiting how conservative a calibrated prediction interval can be. 

The definition of unconditional interval calibration at \eqref{eq:cal_unc2} is still fairly weak, since it averages across all possible forecasts and observations, neglecting possible conditional biases. One could argue that it therefore provides a minimum requirement for forecasts to be considered reliable, justifying checks for unconditional calibration in practice. However, it has repeatedly been remarked that forecasts should be conditionally calibrated, since this ensures that forecasts can be taken at face value in particular forecasting cases, not just on average \citep[see e.g.][]{Christoffersen1998,ShaferVovk2008,FontanaEtAl2023}. We adopt the following definition of conditional interval calibration.

\begin{defin}
	An interval forecast $[L, U]$ is \emph{auto-calibrated} if 
	\begin{equation}\label{eq:cal_con}
		L \in Q_{\frac{\alpha}{2}}(Y \mid L, U), \quad \text{and} \quad U \in Q_{1-\frac{\alpha}{2}}(Y \mid L, U).
	\end{equation}
\end{defin}

That is, $L$ and $U$ are conditional quantiles of $Y$ given the interval forecast itself. This can equivalently be written as 
\begin{equation}\label{eq:cal_con2}
    \Q(Y < L \mid L,U) \le \frac{\alpha}{2} \le \Q(Y \le L \mid L,U), \quad \text{and} \quad \Q(Y > U \mid L,U) \le \frac{\alpha}{2} \le \Q(Y \ge U \mid L,U),
\end{equation}
showing that it provides a conditional generalisation of \eqref{eq:cal_unc2}. If the conditional quantiles are unique, an interval forecast is auto-calibrated if and only if 
\begin{equation}\label{eq:cal_con3}
	\Q(Y < L \mid L, U) = \frac{\alpha}{2}, \quad \text{and} \quad \Q(Y > U \mid L, U) = \frac{\alpha}{2}.
\end{equation}
This in turn implies that the interval forecast has the correct conditional coverage, $\Q(Y \in [L, U] \mid L, U) = 1 - \alpha$, which further implies correct unconditional coverage at \eqref{eq:cal_unc}. 

This definition of conditional calibration is conditional on the interval forecast itself. This is akin to notions of auto-calibration of predictive distributions that are familiar in the forecast evaluation literature \citep{Tsyplakov2011,GneitingRanjan2013}. However, while desirable in theory, auto-calibration is generally difficult to assess in practice, since it is generally not possible to calculate the conditional distribution of $Y$ given the forecast \citep[see e.g.][]{CandilleTalagrand2005,ArnoldEtAl2023}. Similarly, auto-calibration of interval forecasts is difficult to assess in practice since it requires knowledge of the conditional quantiles $Q_{\alpha/2}(Y \mid L, U)$ and $Q_{1 - \alpha/2}(Y \mid L, U)$.

Rather than assessing conditional calibration with respect to all of the information provided by $[L, U]$, \cite{ArnoldEtAl2023} propose a notion of \emph{isotonic calibration} that evaluates forecasts with respect to a relevant subset of the information provided by the prediction interval. This leads to a conditional notion of forecast calibration that is slightly weaker than auto-calibration, but which facilitates an empirical assessment in practice. We follow a similar approach here for the calibration of interval forecasts. 

\begin{defin}
	An interval forecast $[L, U]$ is \emph{isotonically calibrated} if 
	\begin{equation}\label{eq:cal_iso}
		L \in Q_{\frac{\alpha}{2}}(Y \mid \Aa(L, U)) \quad \text{and} \quad U \in Q_{1-\frac{\alpha}{2}}(Y \mid \Aa(L, U)),
	\end{equation}
	where $\Aa(L, U)$ denotes the $\sigma$-lattice generated by $L$ and $U$. 
\end{defin}

That is, $L$ and $U$ are quantiles of the isotonic conditional law of $Y$ given the interval forecast \citep{ArnoldZiegel2023}. We refer readers to \cite{ArnoldZiegel2023} and references therein for further details on $\sigma$-lattices and isotonic conditional laws. While isotonic calibration does not correspond to a conditional coverage guarantee as at \eqref{eq:cal_con3}, under mild conditions, it implies that
\begin{equation}\label{eq:cal_qu}
    L \in Q_{\frac{\alpha}{2}}(Y \mid L) \quad \text{and} \quad U \in Q_{1-\frac{\alpha}{2}}(Y \mid U),
\end{equation}
see \citet[Lemma 6.8]{ArnoldZiegel2023}.
That is, the bounds $L$ and $U$ are quantile calibrated predictions for the $\alpha/2$ and $1 - \alpha/2$ quantiles of $Y$ respectively, in the sense of \cite{GneitingResin2023}. The requirement at \eqref{eq:cal_qu} could itself be used to define a notion of quantile calibration of interval forecasts, though since this is strictly weaker than isotonic calibration, we do not consider it in the following. Isotonic interval calibration additionally implies unconditional interval calibration, but is implied by auto-calibration. The relationships between these different notions of interval calibration are summarised below. Note that any conditional quantile $Z \in Q_\beta(Y \mid L,U)$ for some $\beta \in (0,1)$ can be written as  $Z = g(L,U)$ for some (measurable) function $g$ of $L$ and $U$.

\begin{propo}\label{prop:cal}
	The following relationships hold between the notions of interval calibration:
	\begin{itemize}
		\item[(i)] Auto-calibration implies unconditional calibration, but the converse is not true.
        \item[(ii)] Isotonic calibration implies unconditional calibration, but the converse is not true.
		\item[(iii)] Auto-calibration implies isotonic calibration.
        \item[(iv)] Isotonic calibration implies auto-calibration if, for any $g(L,U) \in Q_\beta(Y \mid L,U)$, $\beta \in \{\alpha/2,1-\alpha/2\}$, it holds that $g$ is componentwise increasing.
	\end{itemize}
\end{propo}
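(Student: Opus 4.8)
The plan is to prove the four parts separately. The two forward implications that involve only ordinary conditioning follow by the tower property: for (i), auto-calibration in the form \eqref{eq:cal_con2} gives $\Q(Y<L\mid L,U)\le\alpha/2\le\Q(Y\le L\mid L,U)$ almost surely (and the analogous statement for $U$), and applying $\E[\point]$ to each inequality yields precisely \eqref{eq:cal_unc2}; for (ii), isotonic calibration implies the quantile-calibration property \eqref{eq:cal_qu} by \citet[Lemma~6.8]{ArnoldZiegel2023}, and taking $\E[\point]$ in $\Q(Y<L\mid L)\le\alpha/2\le\Q(Y\le L\mid L)$ (and the analogue for $U$) gives \eqref{eq:cal_unc2} again.

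Parts (iii) and (iv) are handled through the variational characterisation of quantiles. Let $\rho_\beta(u)=u(\beta-\one\{u<0\})$ denote the pinball loss, and let $\mathcal{M}$ and $\mathcal{C}$ be, respectively, the class of all $\sigma(L,U)$-measurable functions and the class of all componentwise increasing functions of $(L,U)$, which are exactly the $\Aa(L,U)$-measurable ones, so $\mathcal{C}\subseteq\mathcal{M}$. I would use two standard facts: (a) $Z\in Q_\beta(Y\mid L,U)$ if and only if $Z$ minimises $W\mapsto\E[\rho_\beta(Y-W)]$ over $W\in\mathcal{M}$; and (b) $Z\in Q_\beta(Y\mid\Aa(L,U))$ if and only if $Z$ minimises the same functional over $W\in\mathcal{C}$ --- the population form of isotonic quantile regression \citep{HenziEtAl2021,ArnoldZiegel2023}. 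The decisive observation is that the coordinate projections $(L,U)\mapsto L$ and $(L,U)\mapsto U$ are componentwise increasing and therefore lie in $\mathcal{C}$. For (iii): auto-calibration and (a) make $L$ a minimiser over $\mathcal{M}$ of $\E[\rho_{\alpha/2}(Y-\point)]$; since $L\in\mathcal{C}\subseteq\mathcal{M}$, it is also a minimiser over $\mathcal{C}$, so $L\in Q_{\alpha/2}(Y\mid\Aa(L,U))$ by (b), and symmetrically for $U$ --- this is isotonic calibration. For (iv): fix $\beta\in\{\alpha/2,1-\alpha/2\}$ and take any $g(L,U)\in Q_\beta(Y\mid L,U)$; the hypothesis puts $g(L,U)\in\mathcal{C}$, so since $g(L,U)$ minimises $\E[\rho_\beta(Y-\point)]$ over $\mathcal{M}$ and belongs to $\mathcal{C}$, the minima over $\mathcal{M}$ and over $\mathcal{C}$ coincide. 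Isotonic calibration and (b) then say that $L$ (for $\beta=\alpha/2$) and $U$ (for $\beta=1-\alpha/2$) attain the minimum over $\mathcal{C}$, hence also over $\mathcal{M}$, so $L\in Q_{\alpha/2}(Y\mid L,U)$ and $U\in Q_{1-\alpha/2}(Y\mid L,U)$ by (a), i.e.\ auto-calibration.

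For the ``converse is not true'' clauses of (i) and (ii), a single counterexample suffices, since by (iii) a forecast that is not isotonically calibrated is also not auto-calibrated. Let $Y\sim\mathrm{Uniform}[0,1]$, let $B$ be an independent fair coin, fix $\epsilon\in\bigl(0,\min\{\alpha/2,(1-\alpha)/2\}\bigr)$, and put $(L,U)=(\alpha/2-\epsilon,\,1-\alpha/2+\epsilon)$ on $\{B=0\}$ and $(L,U)=(\alpha/2+\epsilon,\,1-\alpha/2-\epsilon)$ on $\{B=1\}$; the constraint on $\epsilon$ guarantees $0<L<U<1$ almost surely. Averaging over $B$ gives $\Q(Y<L)=\Q(Y>U)=\alpha/2$, so $[L,U]$ is unconditionally calibrated. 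However, $Y$ is independent of $(L,U)$, so the conditional law of $Y$ given $(L,U)$ --- and hence also the isotonic conditional law of $Y$ given $\Aa(L,U)$, the constant map $t\mapsto\Q(Y\le t)$ being its own isotonic projection --- equals the marginal $\mathrm{Uniform}[0,1]$ law, whose unique $\alpha/2$-quantile is $\alpha/2$; but $L=\alpha/2-\epsilon\neq\alpha/2$ on $\{B=0\}$, which has positive probability. Thus \eqref{eq:cal_con} and \eqref{eq:cal_iso} both fail, so $[L,U]$ is neither auto- nor isotonically calibrated.

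The main obstacle is assembling the isotonic-regression ingredients underlying (ii)--(iv): one must cite or re-derive the characterisation (b) of quantiles of the isotonic conditional law as minimisers of expected pinball loss over $\Aa(L,U)$-measurable functions, verify that the coordinate projections of $(L,U)$ are $\Aa(L,U)$-measurable, and, in (iv), observe that the stated monotonicity condition is exactly what places a conditional-quantile selection into $\mathcal{C}$ and so equalises the two minimisation problems. A secondary technical point is integrability of $Y$, required for the pinball loss to be finite and for its minimisers to coincide with the conditional quantiles; this can be assumed without loss of generality, or dealt with by the standard almost-sure version of fact (a).
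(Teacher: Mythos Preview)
Your argument is correct and follows the same route as the paper's own proof: the variational characterisation of (isotonic) conditional quantiles---your facts (a) and (b), which are precisely what the paper invokes as Proposition~6.3 of \citet{ArnoldZiegel2023}---combined with the $\Aa(L,U)$-measurability of the coordinate maps $(L,U)\mapsto L$ and $(L,U)\mapsto U$. Your handling of (iii) and (iv) spells out what the paper states more tersely by citing Section~6.1 of \citet{ArnoldZiegel2023}. Two small remarks. For (ii), you go through Lemma~6.8 of \citet{ArnoldZiegel2023}, but the present paper itself flags that implication as holding only ``under mild conditions''; you should either state and check these, or bypass the lemma altogether---for instance, note that $L+t\in\mathcal{C}$ for every $t\in\R$, so $t\mapsto\E[\rho_{\alpha/2}(Y-L-t)]$ is minimised at $t=0$, and the first-order optimality condition there is exactly $\Q(Y<L)\le\alpha/2\le\Q(Y\le L)$. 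For the ``converse is not true'' clauses the paper simply points to Examples~5.1 and~5.2 of \citet{ArnoldZiegel2023}, whereas you build an explicit self-contained example; it is correct, and it is worth recording that your two values of $(L,U)$ are incomparable in the componentwise order, so on this two-point support the $\Aa(L,U)$- and $\sigma(L,U)$-measurable functions coincide, which is what makes the isotonic conditional law equal to the unconditional $\mathrm{Uniform}[0,1]$ law and hence confirms that isotonic calibration fails too.
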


If $Y$ is continuous and the interval forecast is constant, then auto-calibration, isotonic calibration, and unconditional calibration are equivalent, since the interval $[L, U]$ is deterministic. More generally, auto-calibration is a stronger requirement than isotonic calibration, and both are much stronger than unconditional calibration. Isotonic and auto-calibration are equivalent if the conditional upper and lower quantiles of $Y$ are ordered (formalised in part (iv) above), which is a natural requirement for informative predictions, and should be satisfied at least approximately in many relevant forecasting situations.

Unconditional calibration is straightforward to assess in practice by calculating the proportion of observations that fall within the forecast intervals, whereas it is generally not possible to empirically assess the auto-calibration of interval forecasts. We demonstrate in the subsequent section how decompositions of the interval score can be used to assess isotonic calibration, following arguments in \cite{ArnoldEtAl2023} for predictive distributions.

\section{Interval score decompositions}\label{sec:decomps}

\subsection{The interval score}

Calibration is a necessary requirement for interval forecasts to be considered trustworthy. However, checks for calibration do not directly provide a means to compare competing forecasts. Instead, interval forecasts are typically compared using consistent scoring functions \citep{Gneiting2011}. A consistent scoring function takes a forecast and an observation as inputs, and outputs a numerical value that quantifies the forecast accuracy; consistent scoring functions are closely related to proper scoring rules for full predictive distributions with a lower score being indicative of a more accurate forecast \citep{GneitingRaftery2007,WaghmareZiegel2025}.

A scoring function is consistent for a functional of a distribution if the scoring function is minimised in expectation when the functional of the true outcome distribution is issued as the forecast. For example, a scoring function $S : \R^3 \to [0, \infty)$ is consistent for the central $1 - \alpha$ prediction interval if 
\begin{equation}\label{eq:cons}
	\E[S([\ell^*, u^*], Y)] \le \E[S([\ell, u], Y)]
\end{equation}
for all $\ell^* \in Q_{\alpha/2}(Y)$, $u^* \in Q_{1-\alpha/2}(Y)$, and $\ell, u \in \R$ with $\ell < u$. That is, the expected score is minimised when the interval forecast is derived from the relevant quantiles of the outcome $Y$. The scoring function is strictly consistent if equality holds in \eqref{eq:cons} if and only if $\ell \in Q_{\alpha/2}(Y)$ and $u \in Q_{1 - \alpha/2}(Y)$. Note that the interval forecast $[\ell, u]$ in \eqref{eq:cons} is not random.

Interval forecasts are typically assessed using the interval score, or Winkler score \citep{Winkler1972,GneitingRaftery2007},
\begin{align*}
    \mathrm{IS}_{\alpha}([\ell, u], y) &= | u - \ell | + \frac{2}{\alpha} \one\{y < \ell\} (\ell - y) + \frac{2}{\alpha} \one\{y > u\} (y - u) \\
    &= \frac{2}{\alpha} \left[ \mathrm{QS}_{\frac{\alpha}{2}}(\ell, y) + \mathrm{QS}_{1 - \frac{\alpha}{2}}(u, y) \right],
\end{align*}
where 
\begin{equation}\label{eq:qs}
    \mathrm{QS}_{\alpha}(x, y) = (\one\{y \le x\} - \alpha)(x - y)
\end{equation}
is the quantile score or pinball loss. The quantile score at level $\alpha$ is consistent for the $\alpha$-quantile, and, as a sum of quantile scores, the interval score is therefore consistent for the central $1 - \alpha$ prediction interval. 

The interval score is equal to the length of the interval forecast plus a penalty if the outcome falls outside the interval. The penalty increases as $\alpha$ decreases, recognising that the $1 - \alpha$ prediction interval should contain the outcome more often when $\alpha$ is close to zero. The interval score therefore implicitly rewards interval forecasts that are as short as possible, subject to containing the outcome with the desired coverage probability. This is made more explicit in the following section using decompositions of the interval score.

The interval score can be extended by applying non-decreasing transformations of $\ell, u$ and $y$, thereby changing how outcomes outside of the interval are penalised \citep{Gneiting2011}, and can also be generalised to non-central prediction intervals. These two extensions are considered in Appendix \ref{app:extensions}.

\subsection{Population level decomposition of the interval score}

While scoring functions condense forecast performance into a single numeric value, expected scores can be decomposed into terms quantifying different aspects of forecast performance \citep{Murphy1973,Brocker2009}. Most commonly, forecasts are decomposed into three terms, quantifying the inherent uncertainty in the forecast scenario, the discriminative ability of the forecast, and the forecast miscalibration. While \cite{BentzienFriederichs2014} and \cite{GneitingEtAl2023} both consider decompositions of quantile scores, decompositions of the interval score have not been studied.

For the interval score, we can write
\begin{equation}\label{eq:decomp_auto}
	\E[\text{IS}_{\alpha}([L, U], Y)] = \mathrm{UNC} - \mathrm{DSC}_{\mathrm{A}} + \mathrm{MCB}_{\mathrm{A}},
\end{equation}
where
\begin{align*}
	\mathrm{UNC} &= \E[\text{IS}_{\alpha}([q_{\frac{\alpha}{2}}(Y), q_{1 - \frac{\alpha}{2}}(Y)], Y)], \\
	\mathrm{DSC}_{\mathrm{A}} &= \E[\text{IS}_{\alpha}([q_{\frac{\alpha}{2}}(Y), q_{1 - \frac{\alpha}{2}}(Y)], Y)] - \E[\text{IS}_{\alpha}([q_{\frac{\alpha}{2}}(Y \mid L, U), q_{1 - \frac{\alpha}{2}}(Y \mid L, U)], Y)], \\
	\mathrm{MCB}_{\mathrm{A}} &= \E[\text{IS}_{\alpha}([L, U], Y)] - \E[\text{IS}_{\alpha}([q_{\frac{\alpha}{2}}(Y \mid L, U), q_{1 - \frac{\alpha}{2}}(Y \mid L, U)], Y)],
\end{align*}
and the expectation is taken over both the interval forecast $[L, U]$ and the outcome $Y$. These decomposition terms can be defined using arbitrary $\alpha/2$ and $1 - \alpha/2$ quantiles of $Y$ (i.e. any $q_{\alpha/2} \in Q_{\alpha/2}$ and $q_{1-\alpha/2} \in Q_{1-\alpha/2}$), since all result in the same expected interval score.

The first term, $\mathrm{UNC}$, is the expected interval score with the unconditional $\alpha/2$ and $1 - \alpha/2$ quantiles of $Y$ used to construct the prediction interval. This term is independent of the forecast and depends only on the outcome, providing a measure of \emph{uncertainty} in the forecasting scenario. The second term, $\mathrm{DSC}_{\mathrm{A}}$, is the expected improvement in score obtained by using the conditional quantiles of $Y$ given the prediction interval $[L, U]$ to define the interval forecast rather than the unconditional quantiles of $Y$, providing a measure of the \emph{discrimination ability} inherent in the forecast $[L, U]$. This term acts negatively on the score, so that more informative forecasts yield lower interval scores. The final term, $\mathrm{MCB}_{\mathrm{A}}$, is the expected improvement in score obtained by using the conditional quantiles of $Y$ given the prediction interval $[L, U]$ to define the interval forecast rather than $[L, U]$ itself, providing a measure of the \emph{miscalibration} in the forecast $[L, U]$. This term acts positively on the score, penalising forecasts that are miscalibrated.

Since the interval score is non-negative and strictly consistent for the central prediction interval, the decomposition at \eqref{eq:decomp_auto} satisfies several desirable properties.

\begin{propo}\label{prop:decomp_auto}
	The decomposition of the expected interval score at \eqref{eq:decomp_auto} satisfies the following properties:
	\begin{itemize}
		\item[(i)] $\mathrm{DSC}_{\mathrm{A}} \ge 0$ with equality if and only if $q_{\alpha/2}(Y \mid L, U) \in Q_{\alpha/2}(Y)$ and $q_{1 - \alpha/2}(Y \mid L, U) \in Q_{1 - \alpha/2}(Y)$.
		\item[(ii)] $\mathrm{MCB}_{\mathrm{A}} \ge 0$ with equality if and only if the interval forecast $[L, U]$ is auto-calibrated.
	\end{itemize}
\end{propo}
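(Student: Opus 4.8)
The plan is to reduce everything to the quantile score via the identity $\mathrm{IS}_\alpha([\ell,u],y)=\frac{2}{\alpha}\big[\mathrm{QS}_{\alpha/2}(\ell,y)+\mathrm{QS}_{1-\alpha/2}(u,y)\big]$, so that each expected interval score in \eqref{eq:decomp_auto} decouples into an $\alpha/2$-part depending only on the lower bound and a $(1-\alpha/2)$-part depending only on the upper bound, and then to treat each part by conditioning on $(L,U)$ (the tower property of conditional expectation) together with consistency of the quantile score. The one external ingredient is the standard fact that, for any real random variable $Z$, the map $x\mapsto\E[\mathrm{QS}_\beta(x,Z)]$ attains its minimum precisely on the quantile set $Q_\beta(Z)$; this is strict consistency of $\mathrm{QS}_\beta$ in the set-valued sense of \citet{Gneiting2011}.

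For part (ii), I condition on $(L,U)$. Since $L$, $U$, $q_{\alpha/2}(Y\mid L,U)$ and $q_{1-\alpha/2}(Y\mid L,U)$ are all $\sigma(L,U)$-measurable, and since $q_\beta(Y\mid L,U)$ minimises $x\mapsto\E[\mathrm{QS}_\beta(x,Y)\mid L,U]$ for $\beta\in\{\alpha/2,1-\alpha/2\}$,
\begin{align*}
	&\E\big[\mathrm{IS}_\alpha([L,U],Y)\mid L,U\big]-\E\big[\mathrm{IS}_\alpha([q_{\alpha/2}(Y\mid L,U),q_{1-\alpha/2}(Y\mid L,U)],Y)\mid L,U\big] \\
	&\qquad=\frac{2}{\alpha}\big(D_{\alpha/2}+D_{1-\alpha/2}\big), \qquad D_\beta:=\E[\mathrm{QS}_\beta(B_\beta,Y)\mid L,U]-\inf_{x}\E[\mathrm{QS}_\beta(x,Y)\mid L,U],
\end{align*}
with $B_{\alpha/2}=L$ and $B_{1-\alpha/2}=U$. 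Each $D_\beta\ge 0$ by construction, so taking expectations over $(L,U)$ yields $\mathrm{MCB}_{\mathrm{A}}\ge 0$; and since $D_{\alpha/2},D_{1-\alpha/2}\ge 0$ their expected sum vanishes if and only if $D_{\alpha/2}=D_{1-\alpha/2}=0$ almost surely. By the minimiser characterisation applied to the conditional law of $Y$ given $(L,U)$, this is exactly $L\in Q_{\alpha/2}(Y\mid L,U)$ and $U\in Q_{1-\alpha/2}(Y\mid L,U)$ almost surely, which by \eqref{eq:cal_con2} is auto-calibration of $[L,U]$.

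For part (i) I run the identical argument with the constant interval $[q_{\alpha/2}(Y),q_{1-\alpha/2}(Y)]$ in place of $[L,U]$, i.e.\ with $B_{\alpha/2}=q_{\alpha/2}(Y)$ and $B_{1-\alpha/2}=q_{1-\alpha/2}(Y)$ (these are ordered since $\alpha/2<1-\alpha/2$, so they do define a valid, possibly degenerate, interval). Non-negativity of $\mathrm{DSC}_{\mathrm{A}}$ follows verbatim, and $\mathrm{DSC}_{\mathrm{A}}=0$ holds if and only if the unconditional quantiles $q_{\alpha/2}(Y)$, $q_{1-\alpha/2}(Y)$ are themselves elements of $Q_{\alpha/2}(Y\mid L,U)$, $Q_{1-\alpha/2}(Y\mid L,U)$ respectively, almost surely. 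It then remains to reconcile this with the form in the statement, namely $q_{\alpha/2}(Y\mid L,U)\in Q_{\alpha/2}(Y)$ and $q_{1-\alpha/2}(Y\mid L,U)\in Q_{1-\alpha/2}(Y)$ almost surely; the link uses that the marginal law of $Y$ is the $(L,U)$-mixture of its conditional laws, so that a conditional quantile set nested inside the marginal one forces the corresponding conditional distribution functions to be flat at level $\beta$ on the relevant interval. I expect this reconciliation — delicate only because quantiles need not be unique, and immediate once the relevant quantiles are almost surely unique — to be the main obstacle, the core inequalities being routine. Finally, I would note, as the surrounding text already observes, that the decomposition terms, and hence these equivalences, do not depend on which elements of $Q_{\alpha/2}(Y)$, $Q_{1-\alpha/2}(Y)$ (or which measurable versions of the conditional quantiles) are chosen, since all attain the same conditional minima.
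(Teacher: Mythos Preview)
Your proposal is correct and follows essentially the same route as the paper: split the interval score into two quantile scores and use (strict) consistency of $\mathrm{QS}_\beta$ together with conditioning on $(L,U)$, which is precisely the paper's argument via the characterisation of conditional quantiles as minimisers of $\E[\mathrm{QS}_\beta(X,Y)]$ over $\sigma(L,U)$-measurable $X$. You are in fact more careful than the paper on the equality case in part~(i): the paper simply asserts the form $q_{\alpha/2}(Y\mid L,U)\in Q_{\alpha/2}(Y)$ without spelling out the reconciliation step you flag.
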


Of particular interest here is the miscalibration term, $\mathrm{MCB}_{\mathrm{A}}$. This term provides a measure of how far the interval score for $[L, U]$ differs from that of an auto-calibrated forecast interval using the same information, thereby quantifying the extent to which the forecast $[L, U]$ is not auto-calibrated. 

A similar decomposition can be used to evaluate isotonic interval calibration,
\begin{equation}\label{eq:decomp_iso}
	\E[\text{IS}_{\alpha}([L, U], Y)] = \mathrm{UNC} - \mathrm{DSC}_{\mathrm{I}} + \mathrm{MCB}_{\mathrm{I}},
\end{equation}
where $\mathrm{UNC}$ is as defined above, and
\begin{align*}
	\mathrm{DSC}_{\mathrm{I}} &= \E[\text{IS}_{\alpha}([q_{\frac{\alpha}{2}}(Y), q_{1 - \frac{\alpha}{2}}(Y)], Y)] - \E[\text{IS}_{\alpha}([q_{\frac{\alpha}{2}}(Y \mid \Aa(L, U)), q_{1 - \frac{\alpha}{2}}(Y \mid \Aa(L, U))], Y)], \\
	\mathrm{MCB}_{\mathrm{I}} &= \E[\text{IS}_{\alpha}([L, U], Y)] - \E[\text{IS}_{\alpha}([q_{\frac{\alpha}{2}}(Y \mid \Aa(L, U)), q_{1 - \frac{\alpha}{2}}(Y \mid \Aa(L, U))], Y)].
\end{align*}

This decomposition is similar to before, but with quantiles of the conditional distribution of $Y$ given $[L, U]$ replaced with quantiles of the isotonic conditional law of $Y$ given $[L, U]$. Using properties of the isotonic conditional law \citep[see][]{ArnoldZiegel2023}, we can deduce that this isotonic decomposition has analogous desirable properties as the decomposition at \eqref{eq:decomp_auto}.

\begin{propo}\label{prop:decomp_IDR_1}
	The decomposition of the expected interval score at \eqref{eq:decomp_iso} satisfies the following properties:
	\begin{itemize}
		\item[(i)] $\mathrm{DSC}_{\mathrm{I}} \ge 0$ with equality if and only if $q_{\alpha/2}(Y \mid \Aa(L, U)) \in Q_{\alpha/2}(Y)$ and $q_{1 - \alpha/2}(Y \mid \Aa(L, U)) \in Q_{1 - \alpha/2}(Y)$.
		\item[(ii)] $\mathrm{MCB}_{\mathrm{I}} \ge 0$ with equality if and only if the interval forecast $[L, U]$ is isotonically calibrated.
	\end{itemize}
\end{propo}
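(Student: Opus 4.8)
The plan is to exploit the additive structure of the interval score noted in the excerpt, $\mathrm{IS}_{\alpha}([\ell,u],y) = \frac{2}{\alpha}\bigl[\mathrm{QS}_{\alpha/2}(\ell,y) + \mathrm{QS}_{1-\alpha/2}(u,y)\bigr]$, which propagates linearly through the decomposition. Writing $\beta_1 = \alpha/2$ and $\beta_2 = 1-\alpha/2$, each of $\mathrm{UNC}$, $\mathrm{DSC}_{\mathrm{I}}$ and $\mathrm{MCB}_{\mathrm{I}}$ equals $\frac{2}{\alpha}$ times the sum of the corresponding quantity for the lower bound $L$ (at level $\beta_1$) and for the upper bound $U$ (at level $\beta_2$), where in both cases the conditional object is the $\beta_i$-quantile of the \emph{same} isotonic conditional law $\mathcal{L}(Y \mid \Aa(L,U))$ of $Y$ given the $\sigma$-lattice generated by $L$ and $U$. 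Thus it suffices to prove, for $i \in \{1,2\}$ and the matching bound $Z_1 = L$, $Z_2 = U$, the two quantile-level statements: (a) $\E[\mathrm{QS}_{\beta_i}(q_{\beta_i}(Y),Y)] - \E[\mathrm{QS}_{\beta_i}(q_{\beta_i}(Y\mid\Aa(L,U)),Y)] \ge 0$, with equality iff $q_{\beta_i}(Y\mid\Aa(L,U)) \in Q_{\beta_i}(Y)$ almost surely; and (b) $\E[\mathrm{QS}_{\beta_i}(Z_i,Y)] - \E[\mathrm{QS}_{\beta_i}(q_{\beta_i}(Y\mid\Aa(L,U)),Y)] \ge 0$, with equality iff $Z_i \in Q_{\beta_i}(Y\mid\Aa(L,U))$ almost surely.

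Both (a) and (b) follow from the defining optimality property of the isotonic conditional law \citep{ArnoldZiegel2023}: $q_{\beta_i}(Y\mid\Aa(L,U))$ minimises $g \mapsto \E[\mathrm{QS}_{\beta_i}(g,Y)]$ over all predictions $g$ that are isotonic (measurable) with respect to $\Aa(L,U)$, this being the quantile-specific instance of the framework used by \citet{GneitingEtAl2023} and \citet{ArnoldEtAl2023}. For (a), constants, and in particular any $q_{\beta_i}(Y) \in Q_{\beta_i}(Y)$, are isotonic, so the inequality is immediate; for (b), $L$ and $U$ are isotonic with respect to the $\sigma$-lattice $\Aa(L,U)$ they generate, so again the inequality is immediate. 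Summing the two pieces and multiplying by $\frac{2}{\alpha}$ gives $\mathrm{DSC}_{\mathrm{I}} \ge 0$ and $\mathrm{MCB}_{\mathrm{I}} \ge 0$. For the equality cases, $\mathrm{DSC}_{\mathrm{I}} = 0$ iff both terms in (a) vanish, i.e.\ iff $q_{\alpha/2}(Y\mid\Aa(L,U)) \in Q_{\alpha/2}(Y)$ and $q_{1-\alpha/2}(Y\mid\Aa(L,U)) \in Q_{1-\alpha/2}(Y)$; and $\mathrm{MCB}_{\mathrm{I}} = 0$ iff both terms in (b) vanish, i.e.\ iff $L \in Q_{\alpha/2}(Y\mid\Aa(L,U))$ and $U \in Q_{1-\alpha/2}(Y\mid\Aa(L,U))$, which is exactly the definition of isotonic calibration at \eqref{eq:cal_iso}. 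This argument mirrors the proof of Proposition~\ref{prop:decomp_auto}, with the regular conditional law and the tower property there replaced by the isotonic conditional law and its optimality property here.

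The step I expect to be the main obstacle is the precise characterisation of the equality cases, particularly in (a). The quantile score is strictly consistent only when the quantile is unique, and the isotonic conditional law is not a regular conditional expectation, so the tower property cannot be invoked directly; one must instead appeal to the sharper equality statements in the theory of isotonic conditional laws \citep{ArnoldZiegel2023} — essentially that an isotonic prediction attains the minimal expected quantile-score loss if and only if it lies in the isotonic conditional $\beta$-quantile set almost surely. Care is also needed to state integrability assumptions that make all expectations finite and the decomposition meaningful, to handle measurable selection of the quantile functions $q_{\beta}(\cdot)$ and $q_{\beta}(\cdot \mid \Aa(L,U))$, and to fix the convention under which the generated $\sigma$-lattice $\Aa(L,U)$ renders both $L$ and $U$ isotonic; these points are routine within the cited framework but should be made explicit.
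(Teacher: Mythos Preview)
Your proposal is correct and follows essentially the same route as the paper: both split $\mathrm{IS}_\alpha$ into two quantile scores, then invoke the characterisation of isotonic conditional quantiles from \citet{ArnoldZiegel2023} (their Proposition~6.3) as minimisers of $\E[\mathrm{QS}_\beta(X,Y)]$ over $\Aa(L,U)$-measurable $X$, noting that constants and $L,U$ themselves are $\Aa(L,U)$-measurable. The equality cases you flag as a potential obstacle are handled exactly by the ``if and only if'' in that characterisation, so no further work is needed.
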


In this case, the miscalibration term $\mathrm{MCB}_{\mathrm{I}}$ measures the extent to which the interval forecast $[L, U]$ deviates from an isotonically calibrated forecast. Since auto-calibration implies isotonic calibration (Section \ref{sec:cali}), it follows that $\mathrm{MCB}_{\mathrm{I}}$ is also equal to zero when the forecast is auto-calibrated. More generally, the deviation from auto-calibration is always at least as large as the deviation from isotonic calibration, but these terms will often be the same since these two definitions are equivalent in many relevant forecasting situations (see Proposition \ref{prop:cal}).

\begin{propo}\label{prop:IDR_auto}
	It holds that 
	\[
	\E[\mathrm{IS}_{\alpha}([L, U], Y)] \ge \mathrm{MCB}_{\mathrm{A}} \ge \mathrm{MCB}_{\mathrm{I}}.
	\]
\end{propo}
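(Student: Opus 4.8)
The plan is to establish both inequalities by rewriting the relevant differences of decomposition terms explicitly and then invoking two elementary facts: non-negativity of the interval score, and consistency of the quantile score for quantiles, applied conditionally. Throughout I use that all expected scores in sight are finite, so the subtractions defining the decomposition terms are meaningful.

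\emph{First inequality.} I would substitute the definition of $\mathrm{MCB}_{\mathrm{A}}$ from \eqref{eq:decomp_auto} to obtain
\[
\E[\mathrm{IS}_\alpha([L,U],Y)] - \mathrm{MCB}_{\mathrm{A}} = \E\big[\mathrm{IS}_\alpha\big([q_{\alpha/2}(Y\mid L,U),\, q_{1-\alpha/2}(Y\mid L,U)],\, Y\big)\big].
\]
The right-hand side is an expectation of the interval score, which is $[0,\infty)$-valued, hence non-negative; the interval appearing there is non-degenerate because $q_{\alpha/2}(Y\mid L,U) \le q_{1-\alpha/2}(Y\mid L,U)$ by monotonicity of quantiles. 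This yields $\E[\mathrm{IS}_\alpha([L,U],Y)] \ge \mathrm{MCB}_{\mathrm{A}}$.

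\emph{Second inequality.} Subtracting the definitions of $\mathrm{MCB}_{\mathrm{A}}$ and $\mathrm{MCB}_{\mathrm{I}}$, the common term $\E[\mathrm{IS}_\alpha([L,U],Y)]$ cancels, leaving
\[
\mathrm{MCB}_{\mathrm{A}} - \mathrm{MCB}_{\mathrm{I}} = \E\big[\mathrm{IS}_\alpha\big([q_{\alpha/2}(Y\mid\Aa(L,U)),\, q_{1-\alpha/2}(Y\mid\Aa(L,U))],\, Y\big)\big] - \E\big[\mathrm{IS}_\alpha\big([q_{\alpha/2}(Y\mid L,U),\, q_{1-\alpha/2}(Y\mid L,U)],\, Y\big)\big].
\]
Using $\mathrm{IS}_\alpha([\ell,u],y) = \tfrac{2}{\alpha}(\mathrm{QS}_{\alpha/2}(\ell,y) + \mathrm{QS}_{1-\alpha/2}(u,y))$, this reduces to showing, for each $\beta \in \{\alpha/2,\, 1-\alpha/2\}$, that
\[
\E\big[\mathrm{QS}_\beta\big(q_\beta(Y\mid L,U),\, Y\big)\big] \le \E\big[\mathrm{QS}_\beta\big(q_\beta(Y\mid\Aa(L,U)),\, Y\big)\big].
\]
Since $\Aa(L,U)$ is a sub-$\sigma$-lattice of $\sigma(L,U)$, the isotonic conditional quantile $q_\beta(Y\mid\Aa(L,U))$ is $\sigma(L,U)$-measurable, so it equals $g(L,U)$ for some measurable $g$ (which may be taken componentwise increasing; see \citet{ArnoldZiegel2023}). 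I would then condition on $(L,U)$: by consistency of $\mathrm{QS}_\beta$ for the $\beta$-quantile, applied to a regular conditional law of $Y$ given $(L,U)$, for $\Q$-almost every realisation of $(L,U)$ the inner expectation $\E[\mathrm{QS}_\beta(g(L,U),Y)\mid L,U]$ is minimised over all such $g$ by the choice $g(L,U) = q_\beta(Y\mid L,U)$. Taking expectations and using the tower rule gives the displayed inequality, and summing over $\beta \in \{\alpha/2,1-\alpha/2\}$ with weight $2/\alpha$ yields $\mathrm{MCB}_{\mathrm{A}} \ge \mathrm{MCB}_{\mathrm{I}}$, completing the chain.

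\emph{Main obstacle.} The one step needing care is this last reduction: I must confirm that $q_\beta(Y\mid\Aa(L,U))$ is a genuine measurable function of $(L,U)$ — which holds because $\Aa(L,U)\subseteq\sigma(L,U)$ by construction of the isotonic conditional law relative to that $\sigma$-lattice, see \citet{ArnoldZiegel2023} — and that the conditional form of quantile-score consistency applies $\Q$-almost surely, which follows from the existence of regular conditional distributions on $\R$ together with the (unconditional) consistency property \eqref{eq:cons}. Everything else is bookkeeping; no delicate estimates are involved.
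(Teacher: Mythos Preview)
Your proposal is correct and follows essentially the same route as the paper: for $\mathrm{MCB}_{\mathrm{A}}\ge\mathrm{MCB}_{\mathrm{I}}$ the paper also uses $\Aa(L,U)\subseteq\sigma(L,U)$ to conclude that $q_\beta(Y\mid\Aa(L,U))$ is $\sigma(L,U)$-measurable and then invokes the quantile-score minimisation characterisation of the conditional quantile, while the first inequality is handled in both proofs by non-negativity of the interval score. Your only superfluous remark is that $g$ ``may be taken componentwise increasing'' --- this is true but plays no role in the argument, which needs only $\sigma(L,U)$-measurability.
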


\begin{remark}
	The expected interval score could also be decomposed to provide a measure of unconditional miscalibration, along the lines of \citet[Section 2.4]{Bashaykh2022}. However, since unconditional interval calibration can easily be assessed without such decompositions, we omit such an analysis here.
\end{remark}

\subsection{Sample level decomposition of the interval score}\label{sec:iso_dcmp_sample}

In practice, evaluation is performed over a set of $n$ interval forecasts $[\ell_{1}, u_{1}], \dots, [\ell_{n}, u_{n}]$ and corresponding outcomes $y_{1}, \dots, y_{n}$, which can be interpreted as realisations of the random interval $[L, U]$ and the outcome variable $Y$. To estimate the terms of the auto-calibration decomposition at \eqref{eq:decomp_auto}, we require estimates of the conditional quantiles of $Y$ given $[L, U]$. However, assuming the forecasts $[\ell_{1}, u_{1}], \dots, [\ell_{n}, u_{n}]$ are pairwise distinct, we only have one realisation of $Y$ given each observed interval forecast. Estimation of the conditional quantiles therefore requires making some assumptions about the relationship between $Y$ and $[L,U]$, which should ideally be as weak as possible.

\cite{ArnoldZiegel2023} demonstrate that the isotonic conditional law is the population version of isotonic distributional regression (IDR), a flexible non-parametric distributional regression procedure that assumes an isotonic relationship between the covariates and the outcome variable \citep{HenziEtAl2021}. IDR exhibits in-sample optimality properties with respect to a large class of proper scoring rules, including the continuous ranked probability score (CRPS). \cite{ArnoldEtAl2023} demonstrate that it can therefore be used within decompositions of the CRPS to measure isotonic calibration of predictive distributions. Since IDR is also optimal with respect to the interval score, a similar framework can be applied to decompositions of the interval score, facilitating an assessment of the isotonic calibration of interval forecasts. 

To estimate the isotonic decomposition terms at \eqref{eq:decomp_iso}, we apply IDR to the observations $y_1, \dots, y_n$ using the interval forecasts $[\ell_1, u_1], \dots, [\ell_n, u_n]$ as covariates. IDR requires a partial order on the space of covariates, and we say that $[\ell_i, u_i] \le [\ell_j, u_j]$ if $\ell_i \le \ell_j$ and $u_i \le u_j$. That is, an interval forecast is larger (smaller) than another if the bounds of the interval are larger (smaller) than those of the other forecast; in this case, we say that the two forecasts are \emph{comparable} (if they are not comparable, the intervals are nested). This ordering is implied if the interval forecasts are derived from distributions that are stochastically ordered \citep{ArnoldEtAl2023}, and is closely related to the requirement given in Proposition \ref{prop:cal}(iv) for isotonic calibration and auto-calibration to be equivalent. In theory, other orderings on the space of interval forecasts could also be considered, though not all orderings will result in meaningful decomposition terms; this is discussed further in Appendix \ref{app:orderings}. 

Applying IDR to the interval forecasts and observations yields discrete predictive distributions $\tilde{F}_1, \dots, \tilde{F}_n$, and we denote by $\tilde{Q}_{\alpha, i}$ the set of $\alpha$-quantiles of $\tilde{F}_i$, with $\tilde{q}_{\alpha, i}=\min \tilde{Q}_{\alpha,i}$ being the lower quantile of $\tilde{F}_i$, for $i = 1, \dots, n$. We similarly denote by $\bar{Q}_{\alpha}$ the set of $\alpha$-quantiles of $y_{1}, \dots, y_{n}$, and $\bar{q}_{\alpha}=\min\bar{Q}_{\alpha}$. The general idea is that IDR provides a recalibrated interval forecast, and forecasts are then assessed relative to this recalibrated prediction. In particular, we approximate the terms of \eqref{eq:decomp_iso} using the sample decomposition
\begin{equation}\label{eq:decomp_IDR}
	\widehat{\mathrm{IS}}=\frac{1}{n} \sum_{i=1}^{n} \text{IS}_{\alpha}([\ell_i, u_i], y_i) = \widehat{\mathrm{UNC}} - \widehat{\mathrm{DSC}}_{\mathrm{I}} + \widehat{\mathrm{MCB}}_{\mathrm{I}},
\end{equation}
where
\begin{align}
	\widehat{\mathrm{UNC}} &= \frac{1}{n} \sum_{i=1}^{n} \text{IS}_{\alpha}([\bar{q}_{\frac{\alpha}{2}}, \bar{q}_{1 -\frac{\alpha}{2}}], y_{i}), \\
	\widehat{\mathrm{DSC}}_{\mathrm{I}} &= \frac{1}{n} \sum_{i=1}^{n} \text{IS}_{\alpha}([\bar{q}_{\frac{\alpha}{2}}, \bar{q}_{1 -\frac{\alpha}{2}}], y_{i}) - \frac{1}{n} \sum_{i=1}^{n} \text{IS}_{\alpha}([\tilde{q}_{\frac{\alpha}{2}, i}, \tilde{q}_{1 -\frac{\alpha}{2}, i}], y_{i}),\label{eq:hatDSC} \\
	\widehat{\mathrm{MCB}}_{\mathrm{I}} &= \frac{1}{n} \sum_{i=1}^{n} \text{IS}_{\alpha}([\ell_i, u_i], y_{i}) - \frac{1}{n} \sum_{i=1}^{n} \text{IS}_{\alpha}([\tilde{q}_{\frac{\alpha}{2}, i}, \tilde{q}_{1 -\frac{\alpha}{2}, i}], y_{i}). \label{eq:hatMCB}
\end{align}

Each of these terms corresponds to a sample estimate of the respective terms in \eqref{eq:decomp_iso}. This decomposition is exact, in that the terms of the decomposition cancel out to recover the average interval score of the forecasts $[\ell_1, u_1], \dots, [\ell_n, u_n]$. Since the interval score is non-negative, it follows that $\widehat{\mathrm{UNC}} \ge 0$ and this term is independent of the forecasts as in the population decomposition. Moreover, due to the in-sample optimality of IDR with respect to the interval score, this sample decomposition also exhibits the following desirable properties.

\begin{propo}\label{prop:decomp_IDR}
	The sample decomposition of the interval score at \eqref{eq:decomp_IDR} has the following properties:
	\begin{itemize}
		\item[(i)] $\widehat{\mathrm{DSC}}_{\mathrm{I}} \ge 0$ with equality if and only if $\bar{q}_{\alpha/2} = \tilde{q}_{\alpha/2,i} $ and $\bar{q}_{\alpha/2} =\tilde{q}_{1-\alpha/2,i}$ for all $i = 1, \dots, n$.
		\item[(ii)] $\widehat{\mathrm{MCB}}_{\mathrm{I}} \ge 0$ with equality if and only if $\ell_i \in \tilde{Q}_{\alpha/2, i}$ and $u_i \in \tilde{Q}_{1-\alpha/2, i}$ for all $i = 1, \dots, n$ and $(\ell_1,\dots,\ell_n)$, $(u_1,\dots,u_n)$ are valid solutions to the respective isotonic quantile regression problems.
	\end{itemize}
\end{propo}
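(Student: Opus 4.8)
My plan is to reduce the whole statement to two independent isotonic quantile regression problems and then invoke the in-sample optimality of IDR. The starting point is the identity $\mathrm{IS}_\alpha([\ell,u],y) = \tfrac{2}{\alpha}\bigl(\mathrm{QS}_{\alpha/2}(\ell,y) + \mathrm{QS}_{1-\alpha/2}(u,y)\bigr)$, which lets me write both $\widehat{\mathrm{DSC}}_{\mathrm{I}}$ and $\widehat{\mathrm{MCB}}_{\mathrm{I}}$ as $\tfrac{2}{\alpha}$ times the sum of a lower-bound term at level $\alpha/2$ and an upper-bound term at level $1-\alpha/2$. Since the interval score is nonnegative, $\widehat{\mathrm{UNC}}\ge 0$, and the decomposition \eqref{eq:decomp_IDR} is exact by telescoping; so only the two sign statements and their equality cases remain to be checked.

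First I would record the feasibility facts that make the argument run. Under the order $[\ell_i,u_i]\le[\ell_j,u_j]$ iff $\ell_i\le\ell_j$ and $u_i\le u_j$, the coordinate maps $[\ell,u]\mapsto\ell$ and $[\ell,u]\mapsto u$ are nondecreasing, so the sequences $(\ell_1,\dots,\ell_n)$ and $(u_1,\dots,u_n)$ are isotonic with respect to this order, hence feasible competitors in the level-$\alpha/2$ and level-$(1-\alpha/2)$ isotonic quantile regression problems built from the covariates $[\ell_i,u_i]$ and responses $y_i$; every constant sequence is trivially feasible too. By construction, IDR returns a fit $\tilde F_1,\dots,\tilde F_n$ whose $\beta$-quantile sequence $(\tilde q_{\beta,1},\dots,\tilde q_{\beta,n})$ solves the isotonic level-$\beta$ quantile regression problem, for every $\beta$ — this is the in-sample universality of IDR the paper already relies on — so $\sum_i\mathrm{QS}_\beta(\tilde q_{\beta,i},y_i)\le\sum_i\mathrm{QS}_\beta(z_i,y_i)$ for every isotonic sequence $(z_i)$.

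For part (i) I apply this with $(z_i)$ the constant sequence $\bar q_\beta$ at each of $\beta\in\{\alpha/2,1-\alpha/2\}$: both the lower- and upper-bound terms of $\widehat{\mathrm{DSC}}_{\mathrm{I}}$ are nonnegative, so $\widehat{\mathrm{DSC}}_{\mathrm{I}}\ge 0$, and since both are nonnegative the total vanishes iff each does, i.e.\ iff the constant interval $[\bar q_{\alpha/2},\bar q_{1-\alpha/2}]$ is itself an optimal isotonic solution at both levels. For part (ii) the same argument with $(z_i)=(\ell_i)$ and $(z_i)=(u_i)$ gives $\widehat{\mathrm{MCB}}_{\mathrm{I}}\ge 0$, with equality iff $(\ell_i)$ is a valid (optimal) solution of the level-$\alpha/2$ problem and $(u_i)$ of the level-$(1-\alpha/2)$ problem; since those sequences are already known to be isotonic, this reduces to $\ell_i\in\tilde Q_{\alpha/2,i}$ and $u_i\in\tilde Q_{1-\alpha/2,i}$ for all $i$. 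To convert all these statements into the precise equality conditions I would invoke the standard description of the solution set of isotonic quantile regression: the minimisers at level $\beta$ are exactly the isotonic sequences whose $i$th entry lies in the $\beta$-quantile set $\tilde Q_{\beta,i}$ of $\tilde F_i$, the pointwise smallest one being $\tilde q_{\beta,\cdot}=\min\tilde Q_{\beta,\cdot}$.

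The main obstacle is this last characterisation and its careful use, together with the bookkeeping for non-uniqueness: the sets $\tilde Q_{\beta,i}$ and $\bar Q_\beta$ may be non-degenerate intervals and the covariates $[\ell_i,u_i]$ may coincide, so some care is needed in part (i) to argue from ``the constant $\bar q_\beta$ is optimal'' to the stated relation with the IDR quantiles $\tilde q_{\beta,i}$ — and to confirm whether this is cleanest stated as an equality or as membership $\bar q_\beta\in\tilde Q_{\beta,i}$. By contrast, the nonnegativity claims and the exactness of the decomposition are immediate once the feasibility observations and the in-sample optimality of IDR are in hand.
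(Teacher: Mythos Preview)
Your overall strategy---split the interval score into two quantile scores, observe that the constant sequences and the sequences $(\ell_i)$, $(u_i)$ are feasible for the isotonic constraints, and invoke the in-sample optimality of IDR---is exactly what the paper does, and it immediately yields both nonnegativity claims. The paper likewise treats part~(ii) as ``immediate'': equality in $\widehat{\mathrm{MCB}}_{\mathrm{I}}$ holds precisely when $(\ell_i)$ and $(u_i)$ attain the optimum, i.e.\ are valid solutions, and valid solutions always satisfy $\ell_i\in\tilde Q_{\alpha/2,i}$, $u_i\in\tilde Q_{1-\alpha/2,i}$.

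There is, however, a genuine error in your ``standard characterisation'' of the solution set. You write that the minimisers at level~$\beta$ are exactly the isotonic sequences with $i$th entry in $\tilde Q_{\beta,i}$, and that therefore ``valid solution'' \emph{reduces} to membership in the quantile sets. This is false: the paper's own Remark following the proposition points out (with a reference to Jordan et~al.) that isotonic sequences lying pointwise in $\tilde Q_{\beta,i}$ need not be minimisers. That is precisely why the statement of~(ii) retains both clauses---membership \emph{and} being a valid solution---rather than collapsing them. So drop the reduction; part~(ii) is then correct as you already had it before that step.

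For the equality case in~(i) you correctly flag that ``the constant $\bar q_\beta$ is optimal'' does not by itself give $\bar q_\beta=\tilde q_{\beta,i}$, and indeed your proposed characterisation would not close the gap (for the same reason as above). The paper supplies a short concrete argument you are missing: optimality of the constant implies $\bar q_{\alpha/2}\in\tilde Q_{\alpha/2,i}$ for all~$i$, hence $\bar q_{\alpha/2}\ge\tilde q_{\alpha/2,i}$ since $\tilde q_{\alpha/2,i}=\min\tilde Q_{\alpha/2,i}$; for the reverse inequality one uses that the IDR fit induces a partition of $\{1,\dots,n\}$ into blocks on which $\tilde q_{\alpha/2,i}$ is the lower $\alpha/2$-quantile of the block empirical distribution, and the stochastic ordering of these block distributions relative to the overall empirical distribution forces $\bar q_{\alpha/2}\le\tilde q_{\alpha/2,i}$. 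The same argument handles level $1-\alpha/2$.
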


\begin{remark}
    The last condition that $(\ell_1,\dots,\ell_n)$, $(u_1,\dots,u_n)$ are valid solutions to the respective isotonic quantile regression problems means that $(\ell_1,\dots,\ell_n)$ needs to be a minimiser of the criterion
    \begin{equation}\label{eq:quantile_average}
    \sum_{i=1}^n \mathrm{QS}_{\alpha/2}(x_i,y_i)
    \end{equation}
    over all sequences $(x_1, \dots, x_n)$ such that $x_i\leq x_j$ if $[l_i,u_i] \leq [l_j,u_j]$, where the quantile score is defined at \eqref{eq:qs}, and analogously for $(u_1,\dots,u_n)$. Any such solution will satisfy $\ell_i \in \tilde{Q}_{\alpha/2, i}$ for all $i=1,\dots,n$. Conversely, while it is tempting to think that all  sequences that satisfy $\ell_i \in \tilde{Q}_{\alpha/2, i}$, $i=1,\dots,n$ and the ordering constraints, are valid solutions, this is unfortunately not true, see \citet[Section 4.1]{JordanEtAl2022}. However, the sequence of lower quantiles, upper quantiles or any convex combination thereof is always a valid solution.
\end{remark}

\begin{remark}\label{rem:ass}
	The properties listed in Proposition \ref{prop:decomp_IDR} hold without any requirements on the data. However, for the terms of the sample decomposition to be practically meaningful, three things are required:
	\begin{itemize}
		\item Sufficient data is available: IDR typically requires an (effective) sample of size of at least $n = 500$ to provide reasonable estimates of the decomposition terms.
		\item There is an (approximately) isotonic relationship between the interval forecasts $[\ell_1, u_1], \dots, [\ell_n, u_n]$ and the observations $y_1, \dots, y_n$. This assumes that the original interval forecasts are already reasonably good. This is not necessary for the theoretical properties of the decomposition terms derived in this section to hold, but the notion of isotonic calibration is more informative when there is a stronger isotonic relationship; essentially the conditioning is then based on more information, as highlighted by Proposition \ref{prop:cal}(iv).
		\item The interval forecasts $[\ell_1, u_1], \dots, [\ell_n, u_n]$ are (mostly) comparable, in that $[\ell_i, u_i] \le [\ell_j, u_j]$ or $[\ell_i, u_i] \ge [\ell_j, u_j]$ holds for a large proportion of the data. Incomparable pairs of forecasts reduce the effective sample size used to train IDR, which can violate the first requirement above.
	\end{itemize}
\end{remark}

We argue that these three properties are satisfied in many practical applications. If the forecasts are not comparable, or if a larger interval forecast does not imply a larger observation, then an assessment of conditional calibration is somewhat unnecessary, since the interval forecasts are clearly not informative. In some applications, the required sample size to fit IDR will be prohibitive, though we demonstrate in the following sections the practical utility of the isotonic decomposition of the interval score when sufficient data is available. IDR can also be computationally expensive if it is fit to large datasets, in which case it may be useful to combine IDR with subsample aggregation (subbagging) strategies designed to reduce computational cost \citep[Section 2.3]{HenziEtAl2021}.

\section{Simulation examples}\label{sec:sims}

Following, \cite{GneitingEtAl2007}, suppose $Y \mid \mu \sim N(\mu, 1)$, where $\mu \sim N(0, 1)$, and let $\tau$ be equal to $1$ and $-1$ each with probability 1/2, independently of $(\mu, Y)$. Consider the following forecast distributions:
\bigskip

\(
\begin{alignedat}{2}
  &\bullet\quad \text{The \emph{unconditional} forecast:} &\quad& N(0, 2) \\
  &\bullet\quad \text{The \emph{ideal} forecast:} &\quad& N(\mu, 1) \\
  &\bullet\quad \text{The \emph{unfocused} forecast:} &\quad& (1/2) N(\mu,1) + (1/2) N(\mu+\tau,1) \\
  &\bullet\quad \text{The \emph{mean-biased} forecast:} &\quad& N(\mu+\tau,1) \\
  &\bullet\quad \text{The \emph{sign-biased} forecast:} &\quad& N(-\mu,1) \\
  &\bullet\quad \text{The \emph{mixed} forecast:} &\quad& \one\{\tau = 1\}N(0,2) + \one\{\tau = -1\}N(-\mu,1)
\end{alignedat}
\)

\bigskip
We generated central 0.9-level prediction intervals from these six forecasts, and evaluated their calibration using the decomposition of the interval score proposed in the previous section. The unconditional and ideal prediction intervals are auto-calibrated, and therefore also isotonically and unconditionally calibrated. However, while the unconditional forecast contains no more information than the marginal distribution of the response $Y$, the ideal forecast has knowledge of $\mu$, and should therefore be more discriminative. The unfocused prediction intervals are unconditionally calibrated, but neither isotonic nor auto-calibrated. The mean-biased, sign-biased, and mixed forecasts satisfy no notion of calibration considered here, but each contain different amounts of information.

We simulated $n=1000$ realisations of $\mu, \tau,$ and $Y$, and evaluated the calibration of the corresponding prediction intervals. For each method, IDR was applied to the 1000 prediction intervals using the \verb!isodistrreg! package in \verb!R! \citep{isodistrreg}, from which the (lower) $\alpha/2$ and $1-\alpha/2$ quantiles were obtained in order to calculate the interval score decomposition terms; while we chose $\tilde{q}_{\alpha/2,i}$ and $\tilde{q}_{1-\alpha/2,i}$ to be the lower $\alpha/2$ and $1-\alpha/2$ quantiles of the corresponding IDR predictive distribution respectively, this choice is somewhat arbitrary and should not significantly affect the resulting decomposition terms, especially when $n$ is large. The decomposition terms for the six forecasts are displayed in Figure \ref{fig:ss_mcb_dsc}. Miscalibration-discrimination plots display the discrimination term of the interval score decomposition against the miscalibration term, with parallel grey isolines corresponding to different mean interval scores \citep{Dimitriadis_2023}, and the uncertainty term of the decomposition shown in green. Methods that issue more accurate forecasts, as assessed using the interval score, will be closer to the top left corner of the plot.

\begin{figure}[t!]
\centering
 \includegraphics[width=0.6\linewidth]{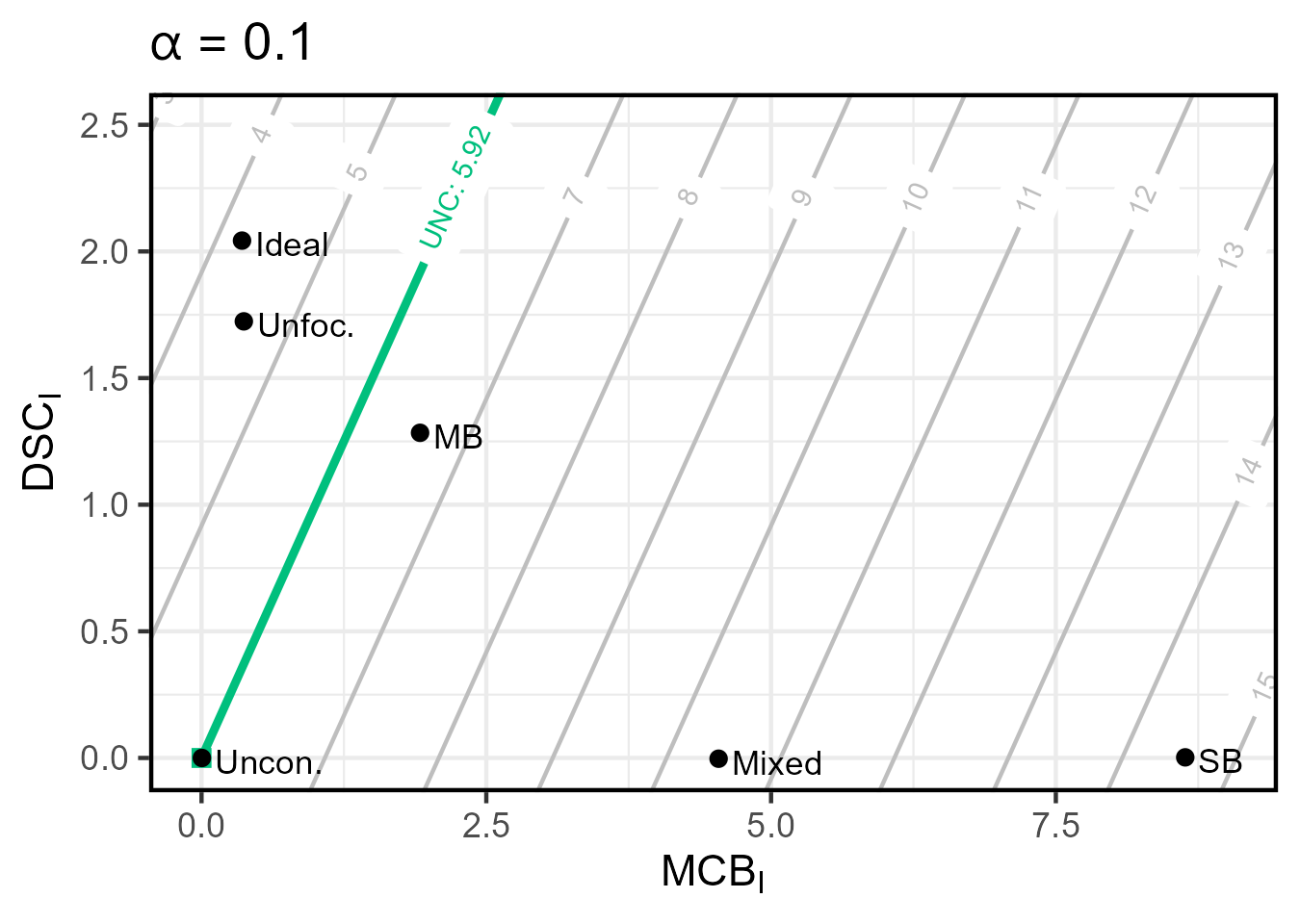}
  \caption{Miscalibration-discrimination plot for the ideal, the unconditional, the unfocused, the mean-biased (MB), the sign-based (SB) and the mixed forecaster with $\alpha=0.1$ and $n=1000$. The miscalibration and discrimination terms of the isotonic interval score decomposition at \eqref{eq:decomp_iso} are shown on the x- and y-axis, respectively. The diagonal grey lines are isolines at different values of the interval score, with the green line indicating the uncertainty term of the decomposition, when $\text{MCB}=\text{DSC}=0$.}
  \label{fig:ss_mcb_dsc}
\end{figure}

As expected, the climatological forecast is perfectly calibrated yet completely uninformative with $\text{MCB}_{\text{I}} = \text{DSC}_{\text{I}} = 0$. In this case, the forecast is constant, yielding constant IDR distributions $\Tilde{F}_1=\dots=\Tilde{F}_n$ with quantiles that correspond to the empirical quantiles of the realised observations. The ideal forecast is also calibrated, but considerably more discriminative, resulting in a lower average interval score. The fact that $\text{MCB}_{\text{I}}$ is not exactly zero for the ideal forecast is due to the finite sample size used to calculate the decomposition terms; the miscalibration term tends towards zero as the sample size increases. The mean-biased forecast receives a higher average score since it performs worse than the ideal forecast with respect to both miscalibration and discrimination. As a mixture between the ideal and the mean-biased forecasts, the unfocused forecast produces miscalibration and discrimination terms between those of its two component forecasts.

Although the sign-biased forecast contains the information provided by $\mu$, it receives a discrimination term of zero. The reason for this is that there is an antitonic relationship between the prediction intervals and the outcome, meaning the forecast contains no isotonic information, which is what is measured by the isotonic discrimination term. One could argue that this is a negative aspect of the decomposition proposed herein (and of score decompositions based on isotonic regression more generally). However, an antitonic relationship between the forecasts and outcomes is unlikely to manifest in practical applications, and one could argue that the conditional calibration of such forecasts is anyway irrelevant. Since the IDR recalibrated prediction intervals follow the same ordering as the original interval forecasts by design, the sign-biased interval forecasts are additionally strongly miscalibrated, resulting in the worst average interval score. The mixed forecast is equal to the climatological and sign-biased forecasts with equal probability, and therefore also has zero discrimination ability, as well as a miscalibration term halfway between those of the two forecasts. 

Table \ref{tab:ss_cov} displays the average interval score for the six forecasts, as well as the unconditional coverage and average length of the prediction intervals before and after being recalibrated using IDR. For the recalibrated prediction intervals, we show the coverage of the open and closed interval forecasts, i.e. $\Q(Y \in (L, U))$ and $\Q(Y \in [L, U])$. Since IDR yields prediction intervals that are unconditionally calibrated, the nominal coverage level $1 - \alpha$ is situated between these two coverages (see \eqref{eq:cal_unc1b}). The open and closed coverages are the same for the prediction intervals before recalibration, since $\Q(Y \in \{L, U\}) = 0$ for the original interval forecasts.

The average length of the interval forecasts can either increase or decrease after the recalibration procedure, as only the order of the intervals is relevant and used to fit IDR. Hence, to determine whether the isotonic decomposition could reasonably be applied to a collection of interval forecasts in practice, one could count how many pairs of prediction intervals are ordered. If only a small number of prediction intervals are ordered, then the resulting IDR distributions will be coarse step functions, with sets of quantiles that correspond to relatively large intervals. In this case, the difference between $\Q(Y \in (L, U))$ and $\Q(Y \in [L, U])$ can be large, making the notion of calibration at \eqref{eq:cal_unc1b} somewhat weaker. 

\begin{table}[t!]
    \centering
    \begin{tabular}{l|ccccccc}
        \multicolumn{2}{c}{} & \multicolumn{2}{c}{Original} & \multicolumn{2}{c}{Recalibrated} \\
        & Interval score & Coverage & Length & Coverage & Length\\
        \hline
        \hline
        Ideal & 4.23 & 0.88 & 3.29 & 0.87 - 0.92 & 3.29 \\
        Climatological & 5.92 & 0.91 & 4.65 & 0.90 - 0.90 & 4.56 \\
        Unfocused & 4.56 & 0.89 & 3.68 & 0.87 - 0.92 & 3.57 \\
        Mean-biased & 6.55 & 0.73 & 3.29 & 0.87 - 0.92 & 3.86 \\
        Sign-biased & 14.55 & 0.56 & 3.29 & 0.90 - 0.90 & 4.56 \\
        Mixed & 10.46 & 0.73 & 3.95 & 0.90 - 0.90 & 4.56 \\
    \end{tabular}
    \caption{Coverage and length of the 0.9-level prediction intervals from the six forecasts in the simulation study, before and after recalibration using IDR. Average interval scores of the original forecasts are also shown. For the IDR-recalibrated intervals, the coverage is shown for both the open $(L, U)$ and closed $[L, U]$ intervals; the unconditional calibration of IDR ensures that 0.9 lies between these two coverages (see \eqref{eq:cal_unc1b}).}
    \label{tab:ss_cov}
\end{table}

\section{Applications}\label{sec:apps}

\subsection{Data}

In this section, we apply the interval score decomposition to prediction intervals obtained from several prediction algorithms applied to three baseline regression datasets. The data and methods are a subset of those considered by \cite{RomanoEtAl2019}. While \cite{RomanoEtAl2019} study the average length and coverage of prediction intervals across 11 benchmark datasets, we present results separately for three of these datasets of different sizes, allowing us to analyse how the amount of data affects the interval score decomposition terms. We consider predictions for the Student-Teacher Achievement Ratio (STAR) in Tennessee \citep{Mosteller1995}, for which $n=433$ test data points are available; the usage of a bike sharing system in Washington D.C. \citep{FanaeeTGama2014} with $n=2178$ test data points; and the number of comments that a Facebook post will receive \citep{facebook} for which $n=8190$.

For each of the three datasets, prediction intervals were obtained from eight different methods. These methods include three versions of the original split conformal prediction algorithm \citep{VovkEtAl2018} using random forests (RF), ridge regression (Ridge), and neural networks (Net), as well as locally adaptive variants of each. The local variants apply the conformal prediction algorithms to normalised residuals, allowing the methods to adapt to heteroskedasticity \citep{PapadopoulosEtAl2008,LeiEtAl2018}. We additionally consider the conformalised quantile regression approach of \cite{RomanoEtAl2019} in combination with a neural network (CQR\_Net), as well as a non-conformal quantile regression method combined with a neural network (Q\_Net). Additional details regarding the methods can be found in the supplementary material of \cite{RomanoEtAl2019}, while implementation of the methods was performed using code from \url{https://github.com/yromano/cqr}. Two additional methods based on quantile random forests were also considered in \cite{RomanoEtAl2019}, though these are omitted from the analysis here since no working code is available to reproduce them. Implementation of IDR was performed using the \verb!isodistrreg! package in \verb!R! and all prediction intervals are generated at level $\alpha=0.1$. 

\subsection{Results}

Table \ref{tab:cs_cov} presents the average interval score, unconditional coverage, and average length of the prediction intervals obtained from the eight methods for each of the three datasets. Every method has approximately the correct unconditional coverage in all datasets. The different methods yield intervals of very similar lengths for the STAR dataset, resulting in similar interval scores, whereas the lengths and scores vary more between the methods in the other two datasets. The CQR and quantile neural networks always yield the smallest intervals and lowest interval scores.

\begin{table}[b!]
    \centering
    \begin{tabular}{l|cccccc}
        \multicolumn{3}{c}{} & \multicolumn{2}{c}{Original} & \multicolumn{2}{c}{Recalibrated} \\
        STAR & Comp. (\%) & Interval score & Coverage & Length & Coverage & Length \\
        \hline
        Ridge & 100.0 & 0.22 & 0.87 & 0.17 & 0.87 - 0.91 & 0.17 \\
        Local Ridge & 51.0 & 0.23 & 0.89 & 0.19 & 0.75 - 0.94 & 0.15 \\
        Random Forests & 100.0 & 0.22 & 0.89 & 0.18 & 0.88 - 0.92 & 0.17 \\
        Local Random Forests & 99.1 & 0.22 & 0.89 & 0.18 & 0.88 - 0.92 & 0.17 \\
        Neural Net & 100.0 & 0.25 & 0.91 & 0.20 & 0.88 - 0.91 & 0.19 \\
        Local Neural Net & 97.6 & 0.28 & 0.89 & 0.22 & 0.86 - 0.92 & 0.18 \\
        CQR Neural Net & 91.0 & 0.26 & 0.88 & 0.19 & 0.87 - 0.92 & 0.18 \\
        Quantile Neural Net & 88.7 & 0.23 & 0.91 & 0.20 & 0.84 - 0.93 & 0.17 \\
    \end{tabular}
    \bigskip
    
    \begin{tabular}{l|cccccc}
        \multicolumn{3}{c}{} & \multicolumn{2}{c}{Original} & \multicolumn{2}{c}{Recalibrated} \\
        Bike & Comp. (\%) & Interval score & Coverage & Length & Coverage & Length \\
        \hline
        Ridge & 100.0 & 3.30 & 0.89 & 2.21 & 0.88 - 0.92 & 2.11 \\
        Local Ridge & 60.7 & 2.99 & 0.89 & 2.18 & 0.81 - 0.93 & 1.87 \\
        Random Forests & 100.0 & 1.19 & 0.89 & 0.71 & 0.85 - 0.93 & 0.60 \\
        Local Random Forests & 99.3 & 1.16 & 0.88 & 0.68 & 0.84 - 0.93 & 0.59 \\
        Neural Net & 100.0 & 1.30 & 0.89 & 0.73 & 0.86 - 0.93 & 0.67 \\
        Local Neural Net & 99.1 & 1.15 & 0.88 & 0.65 & 0.85 - 0.93 & 0.60 \\
        CQR Neural Net & 96.1 & 0.93 & 0.90 & 0.59 & 0.82 - 0.93 & 0.58 \\
        Quantile Neural Net & 94.5 & 0.79 & 0.90 & 0.59 & 0.80 - 0.94 & 0.49 \\
    \end{tabular}
    \bigskip
    
    \begin{tabular}{l|cccccc}
        \multicolumn{3}{c}{} & \multicolumn{2}{c}{Original} & \multicolumn{2}{c}{Recalibrated} \\
        Facebook & Comp. (\%) & Interval score & Coverage & Length & Coverage & Length \\
        \hline
        Ridge & 100.0 & 12.59 & 0.90 & 3.72 & 0.37 - 0.95 & 3.05 \\
        Local Ridge & 59.5 & 8.41 & 0.90 & 3.46 & 0.35 - 0.96 & 2.86 \\
        Random Forests & 100.0 & 9.26 & 0.91 & 1.87 & 0.36 - 0.96 & 2.09 \\
        Local Random Forests & 96.9 & 7.56 & 0.90 & 1.62 & 0.36 - 0.96 & 2.01 \\
        Neural Net & 100.0 & 11.52 & 0.90 & 2.34 & 0.36 - 0.96 & 2.72 \\
        Local Neural Net & 85.3 & 8.39 & 0.90 & 1.78 & 0.35 - 0.96 & 2.58 \\
        CQR Neural Net & 59.9 & 6.17 & 0.90 & 1.11 & 0.35 - 0.96 & 2.12 \\
        Quantile Neural Net & 58.5 & 4.34 & 0.89 & 1.57 & 0.35 - 0.97 & 1.94 \\
    \end{tabular}
    \caption{Coverage and length of the 0.9-level prediction intervals from the eight prediction methods applied to the STAR (top), Bike (middle), and Facebook (bottom) datasets. The proportion of comparable interval forecasts for each prediction method in each dataset are also shown (Comp (\%)), as well as the average interval scores of the original forecasts.}
    \label{tab:cs_cov}
\end{table}

The length and unconditional coverage of the IDR recalibrated prediction intervals are also shown in Table \ref{tab:cs_cov}. Due to in-sample calibration guarantees of IDR, the coverage inequality at \eqref{eq:cal_unc1b} is satisfied by construction. To illustrate this, the coverages of the open and closed prediction intervals are shown, which always contain $1 - \alpha$. A small difference between these open and closed coverages suggests a smoother IDR fit, resulting in smaller sets of quantiles and more reliable estimates of the interval score decomposition terms. The Local Ridge approach tends to have a larger difference between these coverages since it often produces nested interval forecasts, reducing the effective sample size to fit IDR (see Remark \ref{rem:ass}). 

There is an even larger difference between these coverages for the Facebook dataset. This is because the variable of interest is censored below at zero, resulting in a high probability that $Y = L$ when the lower bound $L = 0$, in which case $Y \notin (L, U)$ but $Y \in [L, U]$. This is generally not an issue in practice since central predictions are of less interest when considering bounded or censored variables as a one-sided prediction intervals are generally preferred here. Nonetheless, we consider a central prediction interval to retain a comparison with the results of \cite{RomanoEtAl2019}.

Figure \ref{fig:cs_mcb_dsc} presents miscalibration-discrimination plots obtained from the isotonic decomposition of the interval score. In the STAR dataset, the miscalibration term of most methods is larger than the discrimination term, meaning they perform worse than a hypothetical unconditional forecast, whose score is equal to the uncertainty term of the decomposition. The reason for this is likely due to the smaller amount of data used to train the models, meaning the simpler ridge regression models outperform the more complex neural network-based methods. The high discrimination and miscalibration of the Local Ridge method is likely due to the comparatively low percentage of comparable interval forecasts (see Table \ref{tab:cs_cov}).

The forecasting methods are more useful when applied to the Bike and Facebook datasets, where they are ranked similarly: the best methods overall are the quantile and CQR neural networks, followed in general by the localised and then non-localised versions of the random forest and neural network methods, with the ridge regression forecasts performing worse. The ridge-based methods are less informative than the other methods, resulting in a lower discrimination term of the decomposition. The other methods all attain a similar discrimination term, but the flexibility of the quantile and CQR neural networks generally allows them to produce better calibrated forecasts, particularly when the sample size $n$ is large. However, the miscalibration term is non-zero for all methods, suggesting none of the generated forecasts are isotonically calibrated (and therefore also not auto-calibrated).

The decomposition terms in the miscalibration-decomposition plots provide information that is not available by simply comparing the average lengths of unconditionally calibrated prediction intervals, which is a commonly used method for comparing competing interval forecasts. For example, in the Facebook dataset, the local random forests generate relatively short prediction intervals while achieving a perfect unconditional coverage of 0.9. However, Figure \ref{fig:cs_mcb_dsc} and Table \ref{tab:cs_cov} demonstrate that the shorter intervals are indicative of more informative forecasts (seen by a large discrimination term), but not of better conditional calibration (seen by a large miscalibration term). The local random forests achieve a worse overall interval score than other methods, despite their shorter intervals and perfect unconditional coverage. 

\begin{figure}
    \centering
    \includegraphics[width=0.32\linewidth]{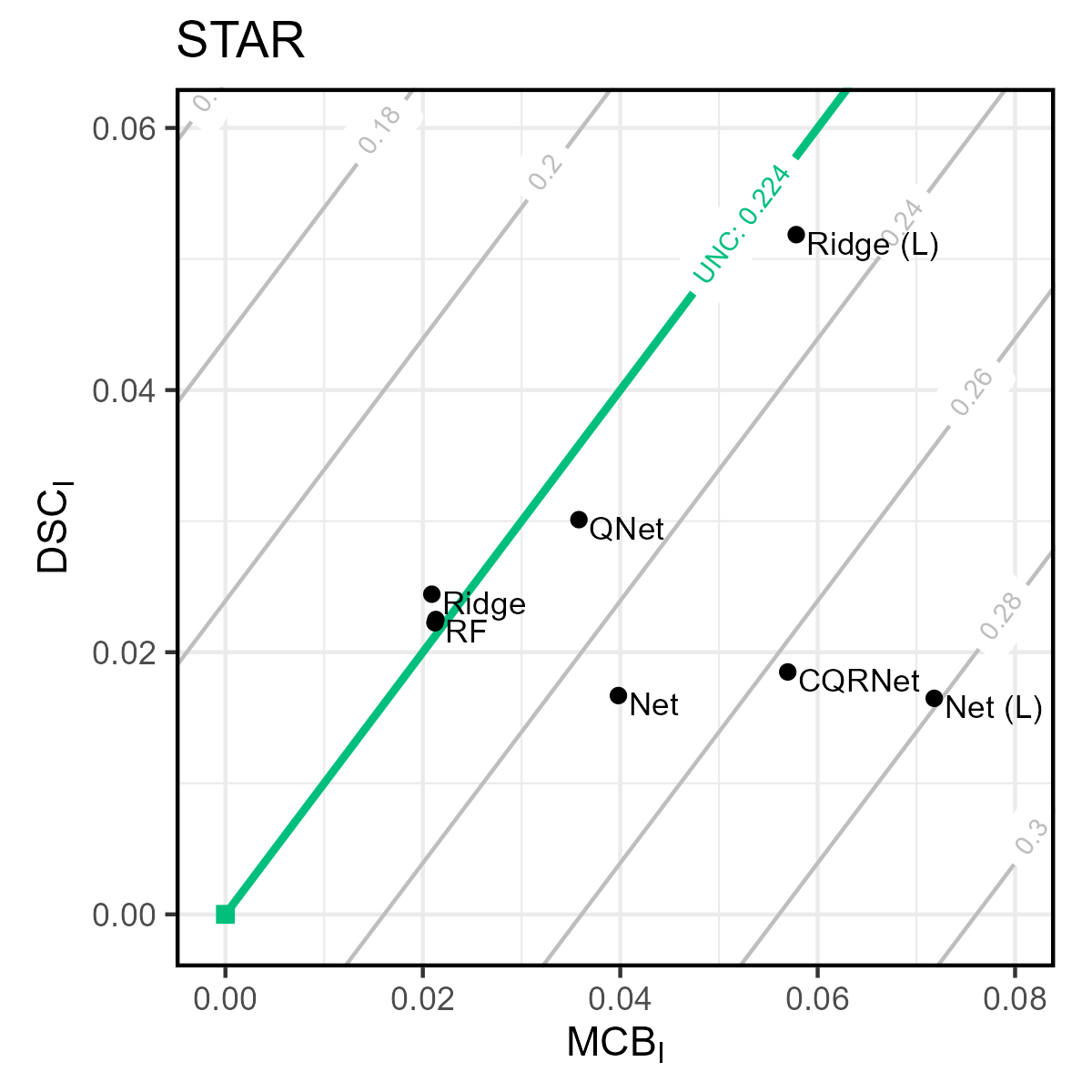}
    \includegraphics[width=0.32\linewidth]{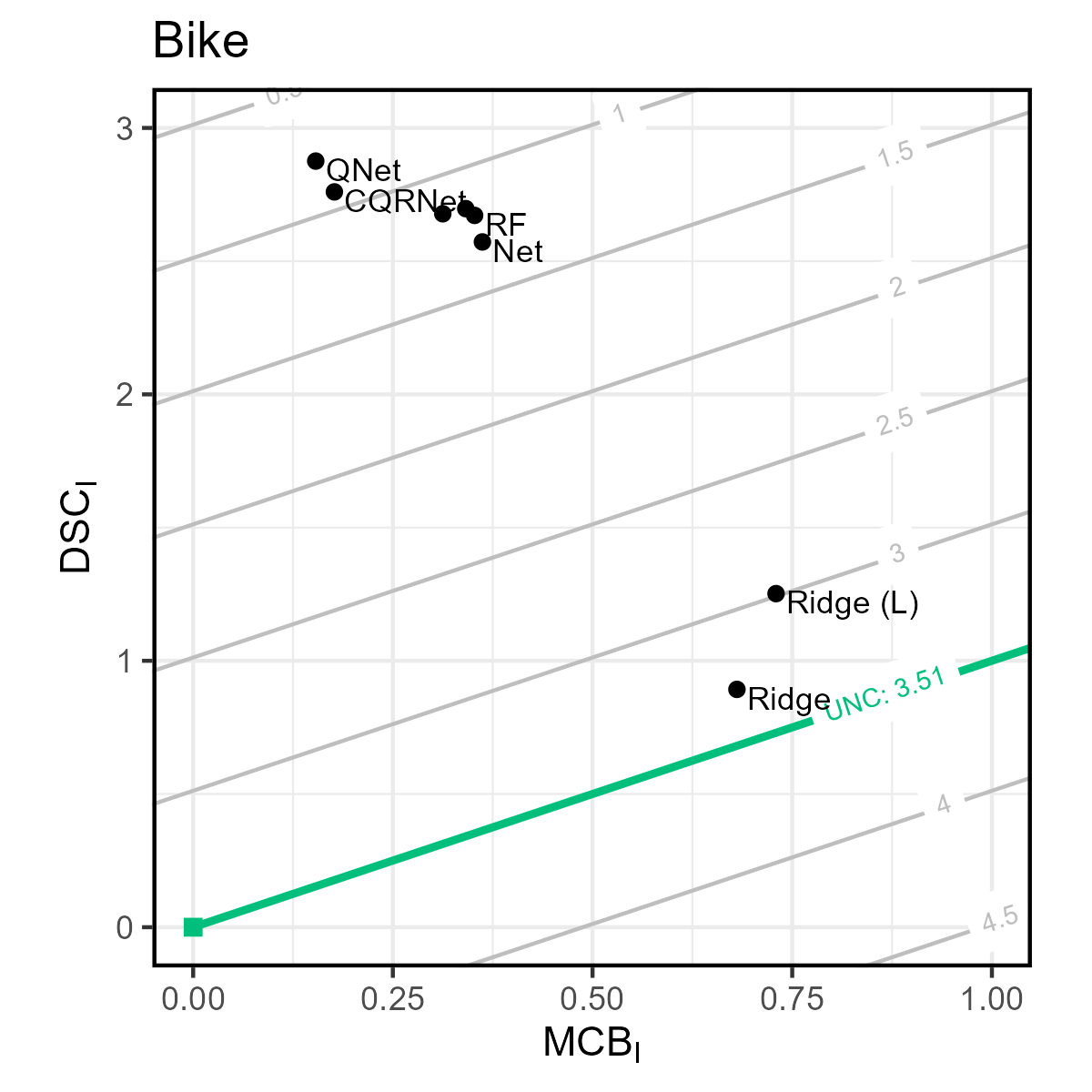}
    \includegraphics[width=0.32\linewidth]{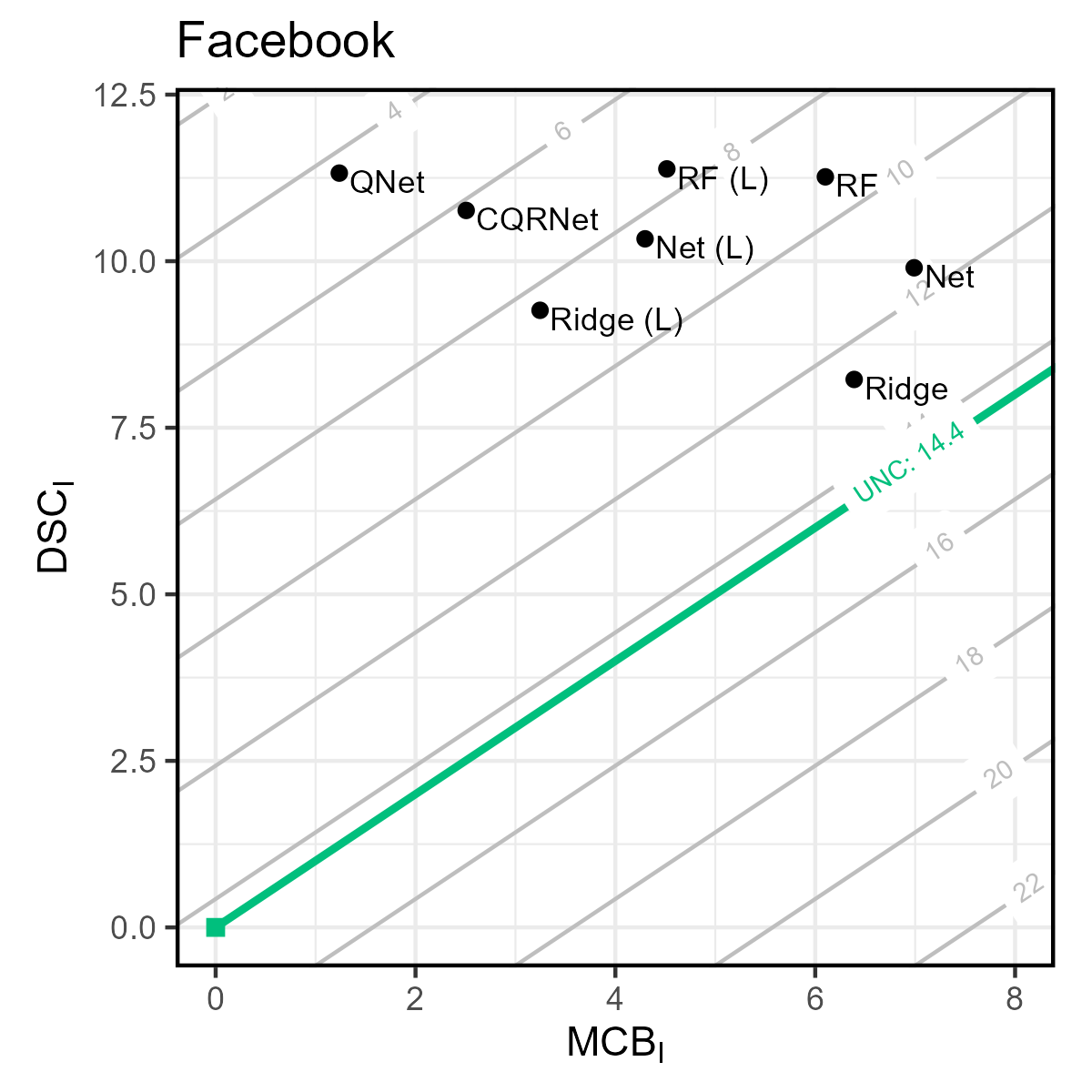}
    \caption{Isotonic miscalibration-discrimination plots for the eight prediction methods and three datasets. See Figure \ref{fig:ss_mcb_dsc} for details. For the STAR dataset, the decomposition terms of RF and RF (L) are equivalent. For the Bike dataset, the RF, Net, RF (L), and Net (L) are clustered together. Some of these labels have therefore been omitted for ease of presentation.}
    \label{fig:cs_mcb_dsc}
\end{figure}

\section{Conclusion}\label{sec:conc}

This paper provides a general framework with which to assess the conditional calibration of interval forecasts. The framework leverages a decomposition of the interval score into terms quantifying the discrimination ability and miscalibration of the forecast as well as the inherent uncertainty of the forecasting problem. By estimating the decomposition terms using isotonic distributional regression, we obtain a meaningful and interpretable decomposition that assesses a strong notion of forecast calibration while being easy to implement in practice. Following \cite{Dimitriadis_2023}, the decomposition terms can additionally be used to generate informative miscalibration-discrimination diagnostic plots that simultaneously display the discrimination ability and the (conditional) miscalibration of interval forecasts.

The framework is more theoretically appealing than simply comparing the average length of prediction intervals that are unconditionally calibrated, which is the common procedure in conformal prediction, for example. If interval forecasts are auto-calibrated, then a shorter interval corresponds to a more informative prediction that will receive a better score when assessed using consistent scoring functions. However, the same is not true when the forecasts are only unconditionally calibrated, in which case no direct connection between the length of the prediction interval and conditional calibration or forecast accuracy can be established.

For reliable empirical results, the interval score decomposition assumes that the forecasts to be assessed are (roughly) isotonic, in that a higher interval generally corresponds to a higher outcome. The requirement that interval forecasts are ordered is easier to satisfy than requiring that entire forecast distributions are stochastically ordered (the former is necessary for the latter), making the decomposition of the interval score easier to implement in practice than the more general decomposition of the CRPS proposed by \cite{ArnoldEtAl2023} for full predictive distributions. The approach additionally requires a moderately large amount of forecasts and observations in the evaluation data on which to fit the isotonic regression models. While this may limit the applicability of the approach in some applications, we demonstrate in Section \ref{sec:apps} how it can be successfully applied when sufficient data is available.

While we focus here on central prediction intervals, the methods can readily be extended to generalised interval scores, and to non-central prediction intervals; these extensions are detailed in Appendix \ref{app:extensions}. Moreover, the approach could also be used to evaluate the miscalibration of interval forecasts at several coverage levels simultaneously. For example, epidemiological forecasts typically take the form of collections of many prediction intervals, and \cite{BracherEtAl2021} suggest to evaluate these forecasts via a weighted interval score that corresponds to a weighted sum of the standard interval score at different coverage levels. This weighted interval score could similarly be decomposed using the methods proposed herein, allowing for an assessment of the conditional calibration of these collections of prediction intervals. As the number of prediction intervals increases, the decomposition terms tend towards those for the CRPS decomposition proposed by \cite{ArnoldEtAl2023}.

More generally, the desirable theoretical properties of the decomposition are a result of the universality of isotonic distributional regression, which means that IDR is optimal (in-sample) with respect to a large class of scoring rules \citep[Theorem 2]{HenziEtAl2021}. The quantile score, interval score, weighted interval score, and CRPS are all examples of scoring rules within this class. This class additionally contains weighted versions of the CRPS \citep{GneitingRanjan2011}. By applying an analogous decomposition to the threshold-weighted CRPS, we would obtain a framework with which to assess conditional notions of tail calibration of probabilistic forecasts \citep{AllenEtAl2025}.

Finally, while score decompositions assess auto-calibration and isotonic calibration, which are generally strong notions of forecast calibration, the decompositions can be made stronger by conditioning on larger information sets \citep{EhmOvcharov2017,AllenEtAl2023}. This allows for an assessment of additional variables that may induce conditional forecast biases. Future work could apply these decompositions to the interval score, and study to what extent isotonic (distributional) regression can be used to obtain meaningful estimates of these decomposition terms in practice.

\bibliographystyle{apalike}
\bibliography{biblio.bib} 

\appendix
\section{Appendices}

\subsection{Proofs}\label{app:proofs}

\begin{proof}[Proof of Proposition \ref{prop:cal}]
For part (i), it suffices to take expectations in \eqref{eq:cal_con2}. Part (ii) follows from part (i) and (iii). Part (iii) follows from \citet[Proposition 6.3]{ArnoldZiegel2023} and the fact that $L$ and $U$ are $\Aa(L,U)$-measurable.
For the reverse directions, see \citet[Examples 5.1 and 5.2]{ArnoldZiegel2023}. The condition in part (iv) implies that $Q_\beta(Y \mid L,U) = Q_\beta(Y \mid \Aa(L,U))$; see the last paragraph of Section 6.1 in \citet{ArnoldZiegel2023}. In this case, the definitions of auto-calibration and isotonic calibration at \eqref{eq:cal_con} and \eqref{eq:cal_iso} are equivalent.
\end{proof}

\begin{proof}[Proof of Proposition \ref{prop:decomp_auto} and \ref{prop:decomp_IDR_1}]

   We only prove the properties of the decomposition assessing isotonic interval calibration (Proposition \ref{prop:decomp_IDR_1}) as the statements for the decomposition evaluating auto-calibration follow, since $\sigma(L,U)$ is also a $\sigma$-lattice and all $\mathcal{A}(L,U)$-measurable random variables are also $\sigma(L,U)$-measurable.
   
   We recall Proposition 6.3 in \cite{ArnoldEtAl2023} on the characterisation of conditional quantiles given a general $\sigma$-lattice. It states that
   for all $\alpha\in (0,1)$, a random variable belongs to the conditional quantile $Q_{\alpha}(Y\,|\,\mathcal{A})$ given some $\sigma$-lattice $\mathcal{A}$ if and only if it minimises the expected quantile score $\mathbb{E}[\mathrm{QS}_{\alpha}(X,Y)]$ over all $\mathcal{A}$-measurable random variables $X$.
   
   Since $\mathrm{IS}_{\alpha}([L,U],Y)=\frac{2}{\alpha}[\mathrm{QS}_{\alpha/2}(L,Y)+\mathrm{QS}_{1-\alpha/2}(U,Y)]$, the discrimination term can be written as
   \begin{align}
       \mathrm{DSC}_{\mathrm{I}} &=\frac{2}{\alpha}\,\Big[ \mathbb{E}[\mathrm{QS}_{\frac{\alpha}{2}}(q_{\frac{\alpha}{2}}(Y),Y)]-\mathbb{E}[\mathrm{QS}_{\frac{\alpha}{2}}(q_{\frac{\alpha}{2}}(Y\,|\,\mathcal{A}(L,U)),Y)]\Big]\\
       &+ \frac{2}{\alpha}\,\Big[\mathbb{E}[\mathrm{QS}_{1-\frac{\alpha}{2}}(q_{1-\frac{\alpha}{2}}(Y),Y)]-\mathbb{E}[\mathrm{QS}_{1-\frac{\alpha}{2}}(q_{1-\frac{\alpha}{2}}(Y\,|\,\mathcal{A}(L,U)),Y)]\Big] \nonumber.
   \end{align}
   It satisfies $\mathrm{DSC}_{\mathrm{I}}\geq 0$ because $q_{\frac{\alpha}{2}}(Y)$ and $q_{1-\frac{\alpha}{2}}(Y)$ are both $\mathcal{A}(L,U)$-measurable, since they are constant. Moreover, both terms in the large square brackets are non-negative. Therefore, $\mathrm{DSC}_{\mathrm{I}}=0$ if and only if $q_{\frac{\alpha}{2}}(Y\,|\,\mathcal{A}(L,U)) \in Q_{\frac{\alpha}{2}}(Y)$ and 
   $q_{1-\frac{\alpha}{2}}(Y\,|\,\mathcal{A}(L,U)) \in Q_{1-\frac{\alpha}{2}}(Y)$ almost surely.
   
   The miscalibration term can be rewritten in a similar way and since $L$ and $U$ are both $\mathcal{A}(L,U)$-measurable by definition, it follows that $\mathrm{MCB}_{\mathrm{I}}\geq 0$ with equality if and only if $L \in Q_{\frac{\alpha}{2}}(Y\,|\,\mathcal{A}(L,U))$ and $U \in Q_{1-\frac{\alpha}{2}}(Y\,|\,\mathcal{A}(L,U))$ almost surely.\\

\end{proof}

\begin{proof}[Proof of Proposition \ref{prop:IDR_auto}]

    Since $\mathcal{A}(L,U)\subseteq \sigma(L,U)$, a conditional quantile given $\mathcal{A}(L,U)$ is $\sigma(L,U)$-measurable. Therefore,  
    \begin{align}
      \mathbb{E}[\mathrm{QS}_{\frac{\alpha}{2}}(q_{\frac{\alpha}{2}}(Y\,|\,L,U),Y)] &\leq \mathbb{E}[\mathrm{QS}_{\frac{\alpha}{2}}(q_{\frac{\alpha}{2}}(Y\,|\,\mathcal{A}
      (L,U)),Y)]\\
      \mathbb{E}[\mathrm{QS}_{1-\frac{\alpha}{2}}(q_{1-\frac{\alpha}{2}}(Y\,|\,L,U),Y)] &\leq \mathbb{E}[\mathrm{QS}_{1-\frac{\alpha}{2}}(q_{1-\frac{\alpha}{2}}(Y\,|\,\mathcal{A}
      (L,U)),Y)]\nonumber
    \end{align}
    and therefore $\mathrm{MCB}_{\mathrm{A}}\geq \mathrm{MCB}_{\mathrm{I}}$.\\

     Since the interval score is non-negative by definition, we therefore have $\mathbb{E}[\mathrm{IS}_{\alpha}(L,U,Y)]\geq \mathrm{MCB}_{\mathrm{A}}\geq \mathrm{MCB}_{\mathrm{I}}.$
\end{proof}

\begin{proof}[Proof of Proposition \ref{prop:decomp_IDR}]

For the proof of part (i), suppose that $\widehat{\mathrm{DCB}}_I = 0$. This implies that $\bar{q}_{\alpha/2}$ is a minimiser of the expression at \eqref{eq:quantile_average}, hence $\bar{q}_{\alpha/2} \ge \tilde{q}_{\alpha/2,i}$ for all $i=1,\dots,n$. The lower isotonic quantiles $\tilde{q}_{\alpha/2,i}$ induce a partition of $\{1,\dots,n\}$ such that $\tilde{q}_{\alpha/2,i}$ is the lower quantile of the empirical distribution of $y_j$ for $j \in B$, where $B \subset \{1,\dots,n\}$ is the partition element that contains $i$ \citep[proof of Theorem 2.2]{HenziEtAl2021}. The partitions do not need to be unique. The stochastic order of the empirical distributions implies that $\bar{q}_{\alpha/2} \le \tilde{q}_{\alpha/2,i}$ for all $i=1,\dots,n$. This implies that $\bar{q}_{\alpha/2} = \tilde{q}_{\alpha/2,i}$ for all $i=1,\dots,n$.

Part (ii) is immediate.

\end{proof}

\subsection{Generalisations}\label{app:extensions}

\subsubsection{Extension to non-central prediction intervals}

The interval score is constructed to evaluate central prediction intervals. However, it can readily be extended to evaluate non-central prediction intervals by combining the quantile score at different quantile levels. More generally, a non-central interval score at levels $\alpha_1, \alpha_2 \in (0, 1)$, with $\alpha_1 < \alpha_2$, could be defined as
\begin{align*}
    \mathrm{IS}_{\alpha_1, \alpha_2}([\ell, u], y) &= \frac{1}{\alpha_1} \mathrm{QS}_{\alpha_1}(\ell, y) + \frac{1}{1 - \alpha_2} \mathrm{QS}_{\alpha_2}(u, y) \\
    &= | u - \ell | + \frac{1}{\alpha_1} \one\{y < \ell\} (\ell - y) + \frac{1}{1 - \alpha_2} \one\{y > u\} (y - u).
\end{align*}
The standard interval score is a particular example of this when $\alpha_1 = \alpha/2$ and $\alpha_2 = 1 - \alpha/2$. The notions of interval forecast calibration in Section \ref{sec:cali} can similarly be generalised to the non-central case.

\begin{defin}
    A non-central interval forecast $[L, U]$ at levels $\alpha_1,\alpha_2$ is \emph{unconditionally calibrated} if 
    \begin{equation*}
        \Q(Y < L) \le \alpha_1 \le \Q(Y \le L) \quad \text{and} \quad \Q(Y < U) \le \alpha_2 \le \Q(Y \le U).
    \end{equation*}
\end{defin}

This simplifies to $\Q(Y \le L) = \alpha_1$ and $\Q(Y \le U) = \alpha_2$ when $\Q(Y \in \{L, U\}) = 0$.

\begin{defin}
    A non-central interval forecast $[L, U]$ at levels $\alpha_1,\alpha_2$ is \emph{auto-calibrated} if
    \begin{equation*}
        L \in Q_{\alpha_1}(Y \mid L, U) \quad \text{and} \quad U \in Q_{\alpha_2}(Y \mid L, U).
    \end{equation*}
\end{defin}

\begin{defin}
    A non-central interval forecast $[L, U]$ at levels $\alpha_1,\alpha_2$ is \emph{isotonically calibrated} if 
    \begin{equation*}
        L \in Q_{\alpha_1}(Y \mid \Aa(L, U)) \quad \text{and} \quad U \in Q_{\alpha_2}(Y \mid \Aa(L, U)),
    \end{equation*}
    where $\Aa(L, U)$ denotes the $\sigma$-lattice generated by $L$ and $U$. 
\end{defin}

The relationships between the different notions of calibration listed in Proposition \ref{prop:cal} extend readily to these more general definitions of calibration for non-central prediction intervals. The auto- and isotonic calibration of the non-central interval forecasts can then be assessed analogously to as in Section \ref{sec:decomps}, by decomposing the non-central interval score by comparing the expected score to the expected score of the non-central intervals obtained from the unconditional and conditional distributions of $Y$ at the relevant levels. All of the desirable properties of the decomposition terms extend naturally to the non-central case.

\subsubsection{Extension to generalised interval scores}

As mentioned in Section \ref{sec:decomps}, the interval score is strictly consistent for central prediction intervals, which follows from the representation of the interval score as a weighted sum of two quantile scores. However, the quantile score is not the only scoring function that is strictly consistent for quantiles. In particular, the quantile score at \eqref{eq:qs} can be extended to the generalised quantile score (or generalised piecewise linear loss), which takes the form
\[
    \mathrm{GQS}_{\alpha,g}(x, y) = (\one\{y \le x\} - \alpha)(g(x) - g(y)),
\]
for some non-decreasing increasing function $g : \R \to \R$ \citep{Thomson1979,Saerens2000}.

A generalised interval score can therefore also be defined as 
\begin{align*}
    \mathrm{GIS}_{\alpha, g}([\ell, u], y) &= \frac{2}{\alpha} \left[ \mathrm{GQS}_{\frac{\alpha}{2},g}(\ell, y) + \mathrm{GQS}_{1 - \frac{\alpha}{2},g}(u, y) \right] \\
    &= | g(u) - g(\ell) | + \frac{2}{\alpha} \one\{y > u\} (g(y) - g(u)) + \frac{2}{\alpha} \one\{y < \ell\} (g(\ell) - g(y)),
\end{align*}
which is also consistent for the central $(1-\alpha)$-prediction level. The standard quantile score and interval score are recovered when $g$ is the identity function. The generalised interval score essentially transforms the observation and the prediction interval before calculating the standard interval score, with the transformation allowing the events $\{y > u\}$ and $\{y < \ell\}$ to be penalised differently.

The generalised interval score can be decomposed analogously to the standard interval score. In particular, we can replace the interval score in the decomposition terms with the generalised interval score, and apply this to the quantiles obtained from isotonic distributional regression. If the function $g$ is strictly increasing, then the generalised quantile score is strictly consistent for quantiles, and we recover exactly the same theoretical properties as outlined in Section \ref{sec:decomps}.

\subsection{Orderings}\label{app:orderings}

To calculate the sample level decomposition of the interval score in Section \ref{sec:iso_dcmp_sample}, we define a partial order on the space of interval forecasts. This is necessary for the implementation of IDR. We choose an ordering that defines an interval forecast as larger than another if both the lower and upper bounds are no smaller than those of the other forecast, with at least one bound being strictly larger. This is similar in essence to stochastic ordering of predictive distributions. While other partial orders could also be employed, not all orderings will yield interval score decomposition terms with the desirable properties listed in Section \ref{sec:iso_dcmp_sample}. 

IDR is designed so that a larger covariate value, according to the chosen partial order, results in a larger predictive distribution, according to the stochastic order on distributions (denoted $\le_{\text{st}}$ in the following). In our context, the IDR predictive distributions uniquely minimise the average interval score over all $n$-tuples $(F_1, \dots, F_n)$ of distribution functions that satisfy $F_i \le_{\text{st}} F_j$ if $[\ell_i, u_i] \le [\ell_j, u_j]$ for $i,j \in \{1, \dots, n\}$, where $[\ell_1, u_1], \dots, [\ell_n, u_n]$ represent the original interval forecasts on which IDR is applied. We additionally have that the IDR prediction intervals uniquely minimise the average interval score over all $n$-tuples $([\ell^*_1, u^*_1], \dots, [\ell^*_n, u^*_n])$ of interval forecasts that satisfy $\ell^*_i \le \ell^*_j$ and $u^*_i \le u^*_j$ if $[\ell_i, u_i] \le [\ell_j, u_j]$ for $i,j \in \{1, \dots, n\}$. These statements hold regardless of the ordering chosen on the space of interval forecasts.

The theoretical guarantees in Proposition \ref{prop:decomp_IDR} rely on the fact that original interval forecasts satisfy the same order relations as the prediction intervals obtained from the IDR recalibrated predictive distributions. That is, for the partial order employed in Section \ref{sec:iso_dcmp_sample}, the $n$-tuple of original forecasts $([\ell_1, u_1], \dots, [\ell_n, u_n])$ satisfies $\ell_i \le \ell_j$ and $u_i \le u_j$ if $[\ell_i, u_i] \le [\ell_j, u_j]$, since this is simply the definition of the partial order. They therefore belong to the class of $n$-tuples over which IDR is the optimal solution. The average interval score of the IDR recalibrated prediction intervals is thus no worse than that of the original forecasts, which ensures that the miscalibration term of the isotonic decomposition is non-negative.

As an example of where this does not hold, consider the order on the space of interval forecasts defined by $[\ell_i, u_i] \le [\ell_j, u_j]$ if $(\ell_i + u_i)/2 \le (\ell_j + u_j)/2$. That is, one interval forecast is larger than another if its midpoint is larger. This corresponds to a total order rather than a partial order, which could circumvent the practical issues related to non-comparable prediction intervals discussed in Remark \ref{rem:ass}. However, with this ordering, $[\ell_i, u_i] \le [\ell_j, u_j]$ does not necessarily imply that $\ell_i \le \ell_j$ and $u_i \le u_j$. Hence, while the IDR predictive intervals are still optimal with respect to the average interval score over all intervals that satisfy $\ell^*_i \le \ell^*_j$ and $u^*_i \le u^*_j$ if $[\ell_i, u_i] \le [\ell_j, u_j]$ for $i,j \in \{1, \dots, n\}$, this criterion is not satisfied by the original forecasts. There is therefore no guarantee that the average interval score of the recalibrated prediction intervals is lower than the average interval score of the original interval forecasts, meaning the miscalibration term of the decomposition could be negative. It is therefore necessary to choose an order that aligns with the stochastic order of predictive distributions that is used in IDR.

\end{document}